\theoremstyle{plain}
\newtheorem{theorem}{Theorem}[section]
\newtheorem{lemma}[theorem]{Lemma}
\newtheorem{corollary}[theorem]{Corollary}
\theoremstyle{definition}
\newtheorem{definition}[theorem]{Definition}
\newtheorem{remark}[theorem]{Remark}
\newcommand{\ra}{\rightarrow}
\newcommand{\la}{\leftarrow}
\newcommand{\bin}{\{0,1\}}
\newcommand{\N}{\mathbb{N}}
\newcommand{\Z}{\mathbb{Z}}
\newcommand{\QR}{\mathcal{R}}
\newcommand{\QN}{\mathcal{N}}
\newcommand{\seq}{\mathsf{S}}
\newcommand{\x}{x}
\newcommand{\Enc}{\mathsf{Enc}}
\newcommand{\Dec}{\mathsf{Dec}}
\newcommand{\qr}[1]{(\frac{#1}{p})}
\newcommand{\Qr}[2]{\left(\dfrac{#2}{#1}\right)}
\newcommand{\dare}[1]{{\sf D}(#1)}
\begin{document}

\title{Private Simultaneous Messages Based on Quadratic Residues}
\author[1,5]{Kazumasa Shinagawa}
\author[2,5]{Reo Eriguchi}
\author[3]{Shohei Satake}
\author[4,5]{Koji Nuida}
\affil[1]{Ibaraki University}
\affil[2]{The University of Tokyo}
\affil[3]{Meiji University}
\affil[4]{Institute of Mathematics for Industry (IMI), Kyushu University}
\affil[5]{National Institute of Advanced Industrial Science and Technology (AIST)}
%
\date{}
\maketitle

\begin{abstract}
Private Simultaneous Messages (PSM) model is a minimal model for secure multiparty computation. 
Feige, Kilian, and Naor (STOC 1994) and Ishai (Cryptology and Information Security Series 2013) constructed PSM protocols based on quadratic residues. 
In this paper, we define QR-PSM protocols as a generalization of these protocols. 
A QR-PSM protocol is a PSM protocol whose decoding function outputs the quadratic residuosity of what is computed from messages. 
We design a QR-PSM protocol for any symmetric function $f: \bin^n \ra \bin$ of communication complexity $O(n^2)$. 
As far as we know, it is the most efficient PSM protocol since the previously known best PSM protocol was of $O(n^2\log n)$ (Beimel et al., CRYPTO 2014). 
We also study the sizes of the underlying finite fields $\mathbb{F}_p$ in the protocols since the communication complexity of a QR-PSM protocol is proportional to the bit length of the prime $p$. 
In particular, we show that the $N$-th Peralta prime $P_N$, which is used for general QR-PSM protocols, can be taken as at most $(1+o(1))N^22^{2N-2}$, which improves the Peralta's known result (Mathematics of Computation 1992) by a constant factor $(1+\sqrt{2})^2$. 
\end{abstract}

\section{Introduction}

\emph{Private Simultaneous Messages (PSM) model} introduced by Feige, Kilian, and Naor \cite{FKN94} and named by Ishai and Kushilevitz \cite{IK97} is a minimal model for non-interactive secure multiparty computation with information-theoretic security. 
In the PSM model, there are $n$ \emph{players} and a special party called a \emph{referee}. 
Each player $P_i$ computes a message $m_i$ from $P_i$'s input $x_i$ and a \emph{shared randomness} $r$, and sends $m_i$ to the referee. 
Here, the shared randomness $r$ is known by all players but the referee. 
Given $n$ messages $m_1, m_2, \ldots, m_n$, the referee computes an output value $y$, which is expected to be $y = f(x_1, x_2, \ldots, x_n)$ for a function $f$ agreed upon by all players and the referee. 
The security of the protocol ensures that the referee cannot learn anything about the secret inputs beyond what can be inferred from the output value. 
The efficiency of PSM protocols is mainly measured by the communication complexity $\sum_{i=1}^n |m_i|$, where $|\cdot|$ denotes the bit length. 

\subsection{PSM Protocols Based on Quadratic Residues}

In this paper, we study PSM protocols based on quadratic residues. 
As far as we know, there are two existing PSM protocols based on quadratic residues. 
Feige, Kilian, and Naor \cite{FKN94} proposed such a protocol for comparing two numbers $x$ and $y$, i.e., deciding whether $x \geq y$ or not. 
Ishai \cite{Ishai13} designed such a protocol for any function $f: \bin^n \ra \bin$. 

\subsubsection{Feige-Kilian-Naor's Protocol}\label{ss:FKN}

The protocol based on quadratic residues by Feige, Kilian, and Naor \cite{FKN94} is a two-player PSM protocol computing the comparison function ${\sf COMP}: \{0,1,2\} \times \{0,1,2\} \ra \{-1, 0, 1\}$ as follows:
\[
{\sf COMP}(x_1, x_2) = 
\begin{cases}
1 & \text{if $x_1 > x_2$},\\
0 & \text{if $x_1 = x_2$},\\
-1 & \text{if $x_1 < x_2$}.
\end{cases}
\]
The shared randomness of the protocol is a pair $(r_1,r_2)$ of an element $r_1$ of $\Z/7\Z$ and a nonzero quadratic residue $r_2$ modulo $7$. 
The first player $P_1$ computes a message $m_1 \in \Z/7\Z$ as $m_1 := r_1 + r_2 x_1 \; (\bmod\;7)$, and the second player $P_2$ computes a message $m_2 \in \Z/7\Z$ as $m_2 := -r_1 - r_2 x_2 \: (\bmod\;7)$. 
Given $m_1, m_2$, the referee computes the quadratic residuosity of $m := m_1 + m_2\; (\bmod\;7)$, and outputs $1$ if $m$ is a non-zero quadratic residue, $-1$ if $m$ is a quadratic nonresidue, and $0$ if $m = 0$. 

\subsubsection{Ishai's Protocol}\label{ss:Ishai}

The protocol based on quadratic residues by Ishai \cite{Ishai13} is an $n$-player PSM protocol computing any function $f: \bin^n \ra \bin$. 
Let $p$ be a prime and $0 < a \leq p-2^n$ an integer such that $a+\sum_{i=1}^n 2^{i-1}b_i$ is a quadratic residue modulo $p$ if and only if $f(b_1, b_2, \ldots, b_n) = 1$. 
From the result by Peralta \cite{Per92}, such a prime $p$ with $p = 2^{O(n)}$ exists. 
The shared randomness of the protocol is a tuple $(r_0, r_1, r_2, \ldots, r_n)$ of a nonzero quadratic residue $r_0$ modulo $p$ and $r_1, r_2, \ldots, r_n \in \Z/p\Z$ such that $\sum_{i=1}^n r_i \equiv 0\; (\bmod\;p)$. 
The player $P_i$ holding $x_i \in \bin$ computes a message $m_i \in \Z/p\Z$ as $m_i := 2^{i-1} r_0 x_i + r_i \; (\bmod\;p)$ if $2 \leq i \leq n$ and $m_1 := r_0 (a + x_1) + r_1 \; (\bmod\;p)$ if $i=1$. 
Given $m_1, m_2, \ldots, m_n$, the referee computes the quadratic residuosity of $m := \sum_{i=1}^n m_i \; (\bmod\;p)$, and outputs $1$ if $m$ is a quadratic residue and $0$ otherwise. 

\subsection{Our Contributions}

First, we introduce the notions of \emph{quadratic residue based PSM (QR-PSM) protocols} and \emph{linear QR-PSM (LQR-PSM) protocols}. 
Let $p$ be a prime. 
A QR-PSM protocol modulo $p$ is a PSM protocol such that the decoding function of the protocol outputs the quadratic residuosity (Legendre symbol) of $\phi(m_1, m_2, \ldots, m_n)$ modulo $p$, where $\phi$ is a function from messages to $\Z/p\Z$, and $m_i$ is the $i$-th message for $1 \leq i \leq n$. 
An LQR-PSM protocol modulo $p$ is a QR-PSM protocol modulo $p$ such that $\phi(m_1, m_2, \ldots, m_n) = \sum_{i=1}^n m_i \;(\bmod\;p)$. 
We remark that Feige-Kilian-Naor's protocol and Ishai's protocol are LQR-PSM protocols. 

Next, we construct new QR-PSM and LQR-PSM protocols. 
For any symmetric function $f: \bin^n \ra \bin$, we obtain an LQR-PSM protocol of communication complexity $O(n^2)$. 
We note that it is the most efficient PSM protocol for symmetric functions so far since the previously known best protocol was of $O(n^2\log n)$ proposed by Beimel, Gabizon, Ishai, Kushilevitz, Meldgaard, and Paskin-Cherniavsky \cite{BGIKMP14}. 
For any weighted threshold function $f: \bin^n \ra \bin$ with weight vector $\bm{w}$ and threshold $t$, we also obtain an LQR-PSM protocol of communication complexity $O(n \cdot \sum_{i=1}^n |w_i|)$. 
We remark that these protocols are more efficient than the protocols obtained by applying Ishai's protocol to these specific functions (see Table \ref{tab:comparison} for efficiency comparison). 
In addition, we show that QR-PSM protocols can be obtained from decomposable randomized encodings (DRE). 
In particular, we show that if a function $f$ is ``embedded'' into another function $g$ and $g$ admits a DRE of output length $s$, we have a QR-PSM protocol with communication complexity $O(s\cdot l(g))$, where $l(g)$ is the ``embedding length'' of $g$ (see Section \ref{ss:linear} for the definition of the embedding). 
This construction can be viewed as a generalization of our LQR-PSM protocols since it admits not only linear polynomials but also higher-degree polynomials. 

In QR-PSM protocols, the communication complexity is dominated by the size of modulus $p$. 
Thus, it is important to give upper and lower bounds on the primes. 
We study two kinds of primes which we name the \emph{Peralta primes} and the \emph{LQR-PSM primes}: the $n$-th Peralta prime $P_n$ is the smallest prime $p$ such that every $n$-bit string appears in the ``quadratic residue sequence modulo $p$'' as a subsequence; and the $n$-th LQR-PSM prime $L_n$ is the smallest prime $p$ such that every function $f: \bin^n \ra \bin$ has an LQR-PSM protocol modulo $p$. 
We first show that $L_n \leq P_{2^{n-1}}$ and $L_n \geq 2^{\frac{2^n-2}{n}}$. 
Since it holds $P_n \leq (1+\sqrt{2})^2n^22^{2n-2}$ for sufficiently large $n$ from the result by Peralta~\cite{Per92}, we have upper and lower bounds on the LQR-PSM primes. 
We also show a slightly better upper bound on the Peralta primes by using graph theory. 
In particular, we have $P_n \leq (1+o(1))n^22^{2n-2}$, which improves the Peralta's result by a constant factor $(1+\sqrt{2})^2$. 

\begin{table}[t]
\caption{The communication complexity of QR-PSM protocols (see Section \ref{ss:linear} for the notations)}
\begin{center}
\begin{tabular}{c|c|c} \hline
 & function & communication complexity \\ \hline
Ishai~\cite{Ishai13} & any function & $O(n \cdot 2^n)$ \\
Corollary \ref{corollary:symmetric} & symmetric function & $O(n^2)$  \\
Corollary \ref{corollary:weighted} & weighted threshold function with weight vector $\bm{w}$ & $O(n \cdot \sum_{i=1}^n |w_i|)$  \\ \hline
\end{tabular}
\end{center}
\label{tab:comparison}
\end{table}%

\subsection{Related Work}

The PSM model was firstly introduced by Feige, Kilian, and Naor \cite{FKN94}. 
Besides the QR-PSM protocol described in Section \ref{ss:FKN}, they also constructed a two-player PSM protocol for any function $f: \bin^n \times \bin^n \ra \bin$ of complexity $O(2^n)$. 
Beimel, Ishai, Kumaresan, and Kushilevitz \cite{BIKK14} improved it to $O(2^{n/2})$ by introducing a decomposable private information retrieval protocol. 
This is still the state-of-the-art two-player PSM protocol among those applicable to arbitrary function $f: \bin^n \times \bin^n \ra \bin$. 
As an impossibility result, Applebaum, Holenstein, Mishra, and Shayevitz \cite{AHMS20} showed that any two-player PSM protocol computing a random function $f: \bin^n \times \bin^n \ra \bin$ requires the complexity $3n-O(\log n)$. 
Narrowing this exponential gap between the upper and lower bounds is an important open problem in cryptography \cite{Vaikuntanathan18}. 

For the case of $k$ players for $k \geq 3$, Beimel, Kushilevitz, and Nissim \cite{BKN18} constructed a $k$-player PSM protocol for any function $f: (\bin^n)^k \ra \bin$ of complexity $O(poly(k) \cdot 2^{nk/2})$. 
Assouline and Liu \cite{AL21} improved it to $O(2^{n(k-1)/2})$ for infinitely many $k$'s and conjectured that it holds for any $k$. 

For a specific class of functions, Ishai and Kushilevitz \cite{IK97} constructed a PSM protocol for a Boolean modulo-$p$ branching program $BP: \bin^n \ra \bin$ of size $a$ with communication complexity $O(\log p \cdot n \cdot a^2)$. 

\section{Preliminaries}\label{s:preliminaries}

\subsection{Notations}
For an integer $n \geq 2$, we denote $[n] := \{1, 2, \ldots, n\}$ and $\Z_n := \Z/n\Z$. 
For a set $S$, we denote by $\# S$ the cardinality of $S$.  
For a bit string $m \in \bin^*$, we denote by $|m|$ the bit length of $m$. 
For an integer $a \in \Z$, we denote by $|a|$ the absolute value of $a$. 

Let $A$ be a ring. 
An \emph{arithmetic formula} over $A$ is a rooted binary tree, where each leaf is labeled by either an input variable $x_i$ ($1 \leq i \leq n$) or a constant $c \in A$, and each intermediate node called a \emph{gate} is labeled by either addition or multiplication. 
Its \emph{depth} is defined by the length of the longest path from the root to a leaf. 
An arithmetic formula can be regarded as a function $f: A^n \ra A$ naturally. 
A \emph{Boolean formula} is an arithmetic formula over $A = \Z_2$. 
In this paper, the basis of Boolean formulas is always $\{\wedge, \oplus\}$. 

A \emph{polynomial} over $A$ is a polynomial whose coefficients are elements of $A$. 
A polynomial can be regarded as a function $f: A^n \ra A$ naturally. 
Every arithmetic formula over $A$ can be represented by a polynomial over $A$. 
In particular, every Boolean formula $f: \bin^n \ra \bin$ can be represented by a polynomial over $\Z_2$ (over the basis $\{\wedge, \oplus\}$), which is called the \emph{Reed-Muller canonical form} of $f$. 

\subsection{PSM Protocols}

\begin{definition}[PSM protocol]
Let $n \geq 2$ be an integer, and $X_1, X_2, \ldots, X_n, Y, R, M_1, M_2, \ldots, M_n$ finite sets. 
Set $X = \prod_{1\leq i \leq n}X_i$ and $M = \prod_{1\leq i \leq n}M_i$. 
Let $\Enc_i: X_i \times R \ra M_i$ ($1 \leq i \leq n$) and $\Dec: M \ra Y$ be functions. 
Here, $X_i, Y, R, M_i, \Enc_i, \Dec$ ($1 \leq i \leq n$) are called the \emph{$i$-th input space}, the \emph{output space}, the \emph{randomness space}, the \emph{$i$-th message space}, the \emph{$i$-th encoding function}, and the \emph{decoding function}, respectively. 
A \emph{private simultaneous messages (PSM) protocol $\Pi$} for a function $f: X \ra Y$ is a $7$-tuple 
\[
\Pi = (n, X, Y, R, M, (\Enc_i)_{1\leq i \leq n}, \Dec),
\]
satisfying the following conditions:
\begin{description}
\item[Correctness.] 
For any $(x_1, \ldots, x_n) \in X$ and any $r \in R$, it holds that
\[
\Dec((\Enc_1(x_1, r),\ldots, \Enc_n(x_n, r))= f(x_1,  \ldots, x_n).
\]
\item[Security.] 
For any $m \in M$ and $x = (x_1,  \ldots, x_n), x' = (x'_1, \ldots, x'_n) \in X$ with $f(x) = f(x')$, it holds that 
\[
\Pr_{r \in R}\bigl[ (\Enc_1(x_1, r),\ldots, \Enc_n(x_n, r)) = m \bigr] =\Pr_{r \in R}\bigl[ (\Enc_1(x'_1, r), \ldots, \Enc_n(x'_n, r)) = m \bigr],
\]
where $r \in R$ is chosen uniformly at random. 
\end{description}
The communication complexity is defined by $\sum_{i=1}^n \log_2(\# M_i)$ and the randomness complexity is defined by $\log_2(\# R)$. 
\end{definition}

\subsection{Decomposable Randomized Encodings}

In this section, we define the notions of \emph{randomized encodings} and \emph{decomposable randomized encodings (DRE)}. 
A DRE over $\Z_p$ for a prime $p$ is used as a building block for constructing QR-PSM protocols. 

\begin{definition}[Randomized encoding]
Let $X, Y, \hat{Y}, R$ be finite sets, and $f: X \ra Y$ a function. 
A \emph{randomized encoding} $\hat{f}: X \times R \ra \hat{Y}$ is a function satisfying the following conditions:
\begin{description}
\item[Correctness.] 
There exists a function $\Dec: \hat{Y} \ra Y$ called a \emph{decoder} such that for any $x \in X$ and $r \in R$, it holds $\Dec(\hat{f}(x, r)) = f(x)$. 
\item[Security.] For any $\hat{y} \in \hat{Y}$ and $x, x' \in X$ such that $f(x) = f(x')$, it holds that 
\[
\Pr_{r \in R}\bigl[ \hat{f}(x, r) = \hat{y} \bigr] =\Pr_{r \in R}\bigl[ \hat{f}(x', r) = \hat{y} \bigr],
\]
where $r \in R$ is chosen uniformly at random. 
\end{description}
\end{definition}

\begin{definition}[DRE]
Let $A$ be a finite ring, and $f: A^n \ra A$ a function. 
A \emph{decomposable randomized encoding (DRE)} of $f$ is a randomized encoding $\hat{f}: A^n \times A^m \ra A^s$ as follows:
\[
\hat{f}((x_1, x_2, \ldots, x_n), r) = (\hat{f}_0(r), \hat{f}_1(x_1, r), \hat{f}_2(x_2, r), \ldots, \hat{f}_{n}(x_{n}, r))
\]
where $\hat{f}_0: A^m \ra A^{s_0}$ and $\hat{f}_i:  A \times A^m \ra A^{s_i}$ ($1\leq i \leq n$) are functions such that $\sum_{i=0}^{n} s_i = s$. 
The integer $s$ is called the \emph{output length} of the DRE. 
\end{definition}

For a function $f: \Z^n \ra \Z$, we define the \emph{DRE complexity} of $f$. 

\begin{definition}[DRE complexity]
Let $f: \Z^n \ra \Z$ be a function. 
For a prime $p$, define $f_p: \Z_{p}^n \ra \Z_{p}$ as the function such that $f_p \equiv f\;(\bmod\;p)$. 
The \emph{DRE complexity} of $f$, denoted by $\dare{f}$, is defined by the minimum integer $s$ such that for every prime $p$, there exists a DRE of $f_p$ with output length at most $s$. 
\end{definition}

Based on Cleve's result \cite{Cleve90} on straight-line programs, Cramer, Fehr, Ishai, and Kushilevitz designed a constant-round multiparty computation protocol for arithmetic formulas \cite[Theorem 3]{CFIK03}. 
This construction can be viewed as a DRE of arithmetic formulas. 

\begin{theorem}[{Cramer-Fehr-Ishai-Kushilevitz \cite{CFIK03}}]\label{thm:DRE}
Let $f: A^n \ra A$ be an arithmetic formula of depth $d$. 
Then, there exists a DRE of $f$ with output length $2^{d+O(\sqrt{d})}$. 
\end{theorem}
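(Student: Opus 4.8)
The plan is to combine Cleve's bounded-width simulation of arithmetic formulas \cite{Cleve90} with a Kilian-style randomization of an iterated matrix product, arranged so that each randomized ``block'' of the product touches at most one input variable. First I would quote the simulation: by \cite{Cleve90}, a depth-$d$ formula $f$ over $A$ is computed by a straight-line program using a constant number $w$ of registers and $\ell = 2^{d+O(\sqrt d)}$ instructions, each of the form $R_i \gets R_i + c\,x_k R_j$ or $R_i \gets R_i + c\,R_j$ with $i\neq j$, $c\in A$. Reading the register state as a column vector $v\in A^w$, the $t$-th instruction acts as $v\mapsto M_t v$, where $M_t = I + c\,E_{i_t j_t}$ or $M_t = I + c\,x_{k_t} E_{i_t j_t}$ is an elementary (hence invertible over $A$) matrix depending on at most one input variable; write $\sigma(t)\in\{0\}\cup[n]$ for that variable. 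I would use the normal form in which all auxiliary registers are restored at the end, so that from the fixed start $v_0=e_w$ one reaches $M_\ell\cdots M_1 v_0 = (f(x),0,\dots,0)^\top =: s(x)$; crucially, $s(x)$ is determined by $f(x)$ alone.

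Next, the DRE. Put into the shared randomness uniformly random invertible matrices $R_1,\dots,R_\ell\in\mathrm{GL}_w(A)$, set $R_0:=I$, and reveal the blocks
\[
C_t := R_{t-1}^{-1} M_{\ell-t+1} R_t \quad(1\le t\le\ell), \qquad C_{\ell+1} := R_\ell^{-1} v_0 .
\]
Each $C_t$ with $t\le\ell$ is a function of the single input variable $x_{\sigma(\ell-t+1)}$ (or of none) together with the randomness, and $C_{\ell+1}$ depends on the randomness only; collecting the blocks according to which variable they touch yields the decomposition $(\hat f_0,\hat f_1,\dots,\hat f_n)$ required of a DRE. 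Since $w=O(1)$ there are $O(\ell)$ blocks, each of $O(1)$ size, so the output length is $O(\ell)=2^{d+O(\sqrt d)}$, as claimed.

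Then I would check the three properties. Correctness: the randomness telescopes, $C_1 C_2\cdots C_{\ell+1} = M_\ell\cdots M_1 v_0 = s(x)$, so the decoder multiplies the revealed blocks and outputs the first coordinate $f(x)$. Decomposability is immediate from the grouping. Security: as $(R_1,\dots,R_\ell)$ ranges uniformly over $\mathrm{GL}_w(A)^\ell$, the tuple $(C_1,\dots,C_\ell)$ consists of $\ell$ independent uniform elements of $\mathrm{GL}_w(A)$ — conditioning on $C_1,\dots,C_{t-1}$ pins down $R_1,\dots,R_{t-1}$, and then $C_t = R_{t-1}^{-1}M_{\ell-t+1}R_t$ is uniform because $R_t$ is — while $C_{\ell+1} = (C_1\cdots C_\ell)^{-1}s(x)$ is a deterministic function of $C_1,\dots,C_\ell$ and $s(x)$. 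Hence the distribution of the whole encoding depends on $x$ only through $s(x)$, i.e.\ only through $f(x)$, which is exactly the security condition.

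Most of the content is imported: the existence of the length-$2^{d+O(\sqrt d)}$ bounded-width simulation is Cleve's theorem \cite{Cleve90}, used as a black box, and the masking step is the standard matrix-randomization argument. The point that needs care is that the encoding must leak \emph{exactly} $f(x)$ and nothing more; this is what forces the use of the normal form whose final state is the canonical vector $(f(x),0,\dots,0)^\top$ rather than an uncontrolled matrix, after which the argument above goes through. Over a general finite ring $A$ there are two routine technicalities — sampling uniformly from $\mathrm{GL}_w(A)$, and the way left/right multiplication by $\mathrm{GL}_w(A)$ is used — which are handled as in Kilian's randomization lemma and are transparent in the main case $A=\Z_p$.
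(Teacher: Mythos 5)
The paper itself gives no proof of this theorem: it is imported from Cramer--Fehr--Ishai--Kushilevitz \cite{CFIK03}, whose construction (built on Cleve's bounded-width simulation \cite{Cleve90}) is precisely the argument you reconstruct --- simulate the depth-$d$ formula by an iterated product of invertible matrices each touching at most one variable, of length $2^{d+O(\sqrt d)}$, then randomize the product Kilian-style and group the blocks by the variable they depend on. So your route is essentially the intended one, and the core steps (telescoping correctness, uniformity of $(C_1,\dots,C_\ell)$ independent of $x$, and $C_{\ell+1}$ being a function of these together with $s(x)$, hence of $f(x)$ alone) are sound.

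Two small points to tidy. First, the normal form $M_\ell\cdots M_1 v_0=(f(x),0,\dots,0)^\top$ is impossible when $f(x)=0$, since a product of invertible matrices cannot send $v_0\neq 0$ to $0$; the correct normal form retains the constant register, e.g.\ $(f(x),0,\dots,0,1)^\top$, which is still determined by $f(x)$ alone, so the rest of your argument is unchanged. Second, the paper's DRE definition takes the randomness uniform over $A^m$, whereas you sample uniformly from $\mathrm{GL}_w(A)^\ell$; as in \cite{CFIK03}, this is repaired by drawing each $R_t$ from the group generated by elementary (or unit-triangular) matrices, which is parametrized by ring elements, and the one-time-pad argument goes through within that group. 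Finally, Cleve's trade-off achieving length $2^{d+O(\sqrt d)}$ uses width $2^{O(\sqrt d)}$ rather than constant width; since your output length is $O(\ell w^2)$, the claimed bound $2^{d+O(\sqrt d)}$ is unaffected.
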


\begin{corollary}\label{corollary:DRE}
Let $f: \Z^n \ra \Z$ be an arithmetic formula of depth $d$. 
Then, we have $\dare{f} \leq 2^{d+O(\sqrt{d})}$. 
\end{corollary}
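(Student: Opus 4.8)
The plan is to derive this directly from Theorem~\ref{thm:DRE} together with the definition of DRE complexity. First I would observe that for any prime $p$, the reduction $f_p \colon \Z_p^n \ra \Z_p$ of $f$ modulo $p$ is computed by the very same arithmetic formula, now interpreted over the ring $A = \Z_p$: each leaf labelled by a constant $c \in \Z$ is relabelled by $c \bmod p \in \Z_p$, and each addition/multiplication gate is reinterpreted as the corresponding operation in $\Z_p$. This relabelling changes neither the tree structure nor the depth, so $f_p$ is an arithmetic formula over $\Z_p$ of depth $d$. Hence Theorem~\ref{thm:DRE}, applied with $A = \Z_p$, yields a DRE of $f_p$ with output length $2^{d+O(\sqrt{d})}$.

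The one point that needs care is that the bound $2^{d+O(\sqrt{d})}$ in Theorem~\ref{thm:DRE} is \emph{uniform} over the choice of ring, i.e.\ the hidden constant in the $O(\sqrt{d})$ term depends only on the depth $d$ and not on $A$. This holds because the Cramer-Fehr-Ishai-Kushilevitz construction produces the decomposed pieces $\hat{f}_0, \hat{f}_1, \ldots, \hat{f}_n$ by a syntactic transformation of the formula whose number of output coordinates is governed purely by the combinatorial structure of the formula (its depth), with the ring $A$ entering only through black-box use of its addition and multiplication. Consequently a single function $s(d) = 2^{d+O(\sqrt{d})}$, independent of $p$, bounds the output length of the DRE of $f_p$ for every prime $p$ simultaneously.

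Combining the two observations: for every prime $p$ there is a DRE of $f_p$ with output length at most $s(d) = 2^{d+O(\sqrt{d})}$, so by the definition of DRE complexity we conclude $\dare{f} \leq s(d) = 2^{d+O(\sqrt{d})}$, which is the claim.

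The main obstacle — and essentially the only nonroutine step — is the uniformity-in-$p$ observation above; the rest is just unwinding the definitions of $f_p$ and of $\dare{\cdot}$.
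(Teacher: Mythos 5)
Your argument is correct and is exactly the reasoning the paper leaves implicit: the paper gives no separate proof of this corollary, treating it as an immediate consequence of Theorem~\ref{thm:DRE} since the same depth-$d$ formula computes $f_p$ over $\Z_p$ for every prime $p$, with the output-length bound depending only on $d$ and not on the ring. Your explicit attention to the uniformity in $p$ is a sensible (and accurate) filling-in of that omitted step.
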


Based on Theorem \ref{thm:DRE}, we have a DRE of polynomials. 

\begin{theorem}\label{thm:DRE_poly}
Let $f: A^n \ra A$ be a degree-$k$ polynomial having $m$ terms. 
Then, there exists a DRE of $f$ with output length $m\cdot k \cdot 2^{O(\sqrt{\log k})}$. 
\end{theorem}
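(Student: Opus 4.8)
The plan is to reduce the claimed bound to Theorem \ref{thm:DRE} by exhibiting an arithmetic formula of small depth for the degree-$k$ polynomial $f$ and then bootstrapping the output lengths of the pieces. First I would observe that DREs compose well in the following sense: if $f = \sum_{j=1}^m T_j$ is a sum of $m$ monomials $T_j$ and each $T_j$ admits a DRE $\hat{T_j}$ with output length $s_j$ sharing the same randomness layout, then concatenating the component outputs $(\hat{T_j}_0(r), \hat{T_j}_i(x_i,r))$ over all $j$, after padding so that each variable $x_i$ contributes a block to a single $i$-th encoder, yields a DRE of the tuple $(T_1,\dots,T_m)$ of output length $\sum_j s_j$; composing with a final linear decoder that adds up the reconstructed $T_j$'s (addition is ``free'' for randomized encodings, since a linear post-processing of the decoder does not change the output length) gives a DRE of $f$ of output length $\sum_j s_j$. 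So it suffices to handle a single monomial of degree at most $k$.

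Next I would bound the DRE complexity of a single degree-$k$ monomial $x_{i_1} x_{i_2} \cdots x_{i_k}$ (with repetitions allowed). This is computed by a balanced binary multiplication tree of depth $d = \lceil \log_2 k \rceil$, which is an arithmetic formula over $A$ of that depth. Applying Theorem \ref{thm:DRE} to this formula gives a DRE with output length $2^{d + O(\sqrt{d})} = 2^{\lceil \log_2 k\rceil + O(\sqrt{\log k})} = k \cdot 2^{O(\sqrt{\log k})}$, where I absorb the ceiling into the constant and use $\sqrt{\lceil \log_2 k \rceil} = O(\sqrt{\log k})$. Summing over the $m$ monomials via the composition step above produces a DRE of $f$ with output length $m \cdot k \cdot 2^{O(\sqrt{\log k})}$, as claimed.

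The step I expect to be the main obstacle is making the ``composition'' step fully rigorous with respect to the \emph{decomposable} structure: Theorem \ref{thm:DRE} as stated gives a DRE $\hat{T_j}$ of each monomial $T_j$, but a priori each such DRE may use its own independent randomness $r^{(j)} \in A^{m_j}$, and the monomial $T_j$ only depends on the subset of variables appearing in it, so the encoders $\hat{T_j}_i$ for indices $i$ not occurring in $T_j$ are trivial (constant, or absent). To get a genuine DRE of $f$ over all $n$ variables I need a single shared randomness $r = (r^{(1)}, \dots, r^{(m)})$ and, for each $i \in [n]$, a single $i$-th encoder that outputs the concatenation, over all $j$ with $i \in T_j$, of $\hat{T_j}_i(x_i, r^{(j)})$ — which is well-defined since $\hat{T_j}_i$ depends only on $x_i$ and $r^{(j)}$ — together with an $\hat{f}_0(r)$ collecting all the $\hat{T_j}_0(r^{(j)})$. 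Correctness is immediate by running each monomial's decoder and summing; security follows coordinate-wise from the independence of the $r^{(j)}$ and the security of each $\hat{T_j}$, since two inputs $x, x'$ with $f(x) = f(x')$ need not agree on each monomial, so I must be slightly careful: actually security of the \emph{whole} encoding requires that the joint distribution of all monomial-encodings be simulatable from $f(x)$ alone. The clean way around this is to note that Theorem \ref{thm:DRE} applied directly to the depth-$(\lceil\log_2 k\rceil + \lceil \log_2 m\rceil)$ formula computing $f$ as a sum of $m$ balanced-product subformulas already yields output length $2^{\log_2(mk) + O(\sqrt{\log(mk)})}$; I would present the argument in that form to sidestep the joint-simulation subtlety, and only invoke per-monomial composition as intuition. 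A short lemma stating that addition gates at the top of a formula do not increase DRE output length (equivalently, that a linear map applied after the decoder is free) closes the remaining gap.
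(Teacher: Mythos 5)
You correctly spot the crux---that naively concatenating per-monomial DREs and summing in the decoder is not a secure DRE of $f$, since the joint view reveals the individual term values $T_1(x),\dots,T_m(x)$, which is strictly more than $f(x)$---but neither of your proposed resolutions closes this gap. The ``addition is free'' / ``a linear post-processing of the decoder does not change the output length'' lemma in your first paragraph is exactly the insecure version of the composition: without re-randomization the concatenated encoding fails the security condition (take e.g.\ $f=x_1x_2+x_3$ and two inputs with equal $f$ but different term values), so it cannot be invoked as stated. Your fallback---applying Theorem \ref{thm:DRE} directly to the depth-$(\lceil\log_2 k\rceil+\lceil\log_2 m\rceil)$ formula for $f$---is sound but only yields output length $m\cdot k\cdot 2^{O(\sqrt{\log k+\log m})}$, which is weaker than the claimed $m\cdot k\cdot 2^{O(\sqrt{\log k})}$; the paper explicitly points out, in the paragraph following the theorem, that saving precisely this $2^{O(\sqrt{\log m})}$ factor is the content of Theorem \ref{thm:DRE_poly}, so the fallback does not prove the statement (e.g.\ for constant $k$ and $m$ polynomial in $n$, the exponents differ substantially).

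The missing idea is additive masking by shares of zero. The paper's proof takes the single function $g(y_0,y_1,\dots,y_k)=y_0+\prod_{i=1}^{k}y_i$, which has an arithmetic formula of depth $\lceil\log_2 k\rceil+1$ and hence, by Theorem \ref{thm:DRE}, a DRE of output length $k\cdot 2^{O(\sqrt{\log k})}$. It then draws $r_1,\dots,r_m\in A$ uniformly subject to $\sum_{i}r_i=0$ and, for the $i$-th term $cx_{j_1}\cdots x_{j_{k'}}$, instantiates the DRE of $g$ on $(y_0,y_1,\dots,y_k)\la(r_i,\,cx_{j_1},x_{j_2},\dots,x_{j_{k'}},1,\dots,1)$ to obtain a DRE of the masked value $cx_{j_1}\cdots x_{j_{k'}}+r_i$; decomposability is preserved because each substituted input depends on a single variable. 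Juxtaposing the $m$ encodings gives correctness (the masked values sum to $f(x)$) and security (the tuple of masked term values is uniform conditioned on summing to $f(x)$, hence simulatable from $f(x)$ alone, and each copy of the DRE of $g$ reveals nothing beyond its masked value), with total output length $m\cdot k\cdot 2^{O(\sqrt{\log k})}$. Your outline needs this masking step (or an equivalent re-randomization) in place of the ``free addition'' lemma; as written, one branch of your argument is insecure and the other proves only a weaker bound.
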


\begin{proof}
Let $g: A^{k+1} \ra A$ be a function such that $g(y_0, y_1, \ldots, y_k) = y_0 + \prod_{i=1}^{k}y_i$. 
Since $g$ can be represented by an arithmetic formula of depth $d = \lceil \log_2 k \rceil + 1$, it has a DRE with output length $2^{d + O(\sqrt{d})} = k \cdot 2^{O(\sqrt{\log k})}$ from Theorem \ref{thm:DRE}. 
Suppose that the $i$-th term of $f$ is a degree-$k'$ term of the form $cx_{j_1}x_{j_2}\cdots x_{j_{k'}}$ ($c \in A$, $k' \leq k$). 
Let $r_1, r_2, \ldots, r_m \in A$ be random numbers such that $\sum_{i=1}^m r_i = 0$. 
Then, we have a DRE of $cx_{j_1}x_{j_2}\cdots x_{j_{k'}} + r_i$ from the DRE of $g$, by setting 
\[
(y_0, y_1, y_2, \ldots, y_{k'}, y_{k'+1}, y_{k'+2}, \ldots, y_k) \la (r_i, cx_{j_1}, x_{j_2}, \ldots, x_{j_{k'}}, 1, 1, \ldots, 1).
\] 
Juxtaposing them for each term, we obtain the DRE of $f$ with output length $m \cdot k \cdot 2^{O(\sqrt{\log k})}$. 
\end{proof}

Let $f: A^n \ra A$ be a degree-$k$ polynomial having $m$ terms. 
Since $f$ can be represented by an arithmetic formula of depth $d = \lceil \log_2 k \rceil + \lceil \log_2 m \rceil$, Theorem \ref{thm:DRE} results in a DRE of $f$ with output length $2^{\log_2 d + O(\sqrt{d})} = m \cdot k \cdot 2^{O(\sqrt{\log k + \log m})}$. 
On the other hand, Theorem \ref{thm:DRE_poly} results in a DRE of $f$ with output length $m \cdot k \cdot 2^{O(\sqrt{\log k})}$. 
Thus, Theorem \ref{thm:DRE_poly} is more efficient than Theorem \ref{thm:DRE} by the factor $2^{O(\sqrt{\log m})}$ in this case. 

\subsection{Quadratic Residues}\label{ss:sequence}

We denote by $\QR_p \subset \Z_p$ the set of non-zero quadratic residues modulo $p$ and by $\QN_p \subset \Z_p$ the set of quadratic nonresidues modulo $p$. 
For an integer $a \in \Z$, the \emph{Legendre symbol} $\qr{a}$ is defined as follows:
\[
\Qr{p}{a} = 
\begin{cases}
1 & \text{if $a \not\equiv 0 \;(\bmod \;p)$ is a quadratic residue modulo $p$,}\\
0 & \text{if $a \equiv 0 \;(\bmod \;p)$,}\\
-1 & \text{if $a$ is a quadratic nonresidue modulo $p$.}\\
\end{cases}
\]

For a prime $p$, we define \emph{the quadratic residue sequence modulo $p$} as the string $\seq_p \in \bin^{p-1}$ such that for every $i \in [p-1]$, the $i$-th bit (from the left) of $\seq_p$ is equal to $1$ if $\qr{i} = 1$ and $0$ otherwise. 
If a string $t\in \bin^*$ is a substring of $\seq_p$, then we say that \emph{$\seq_p$ contains $t$}. 
The quadratic residue sequences modulo primes from $2$ to $19$ are shown as follows:
\[
\begin{tabular}{c|l}\hline
$p$ & $\seq_p$ \\ \hline
$2$&$1$\\
$3$&$10$\\
$5$&$1001$\\
$7$&$110100$\\
$11$&$1011100010$\\
$13$&$101100001101$\\
$17$&$1101000110001011$\\
$19$&$100111101010000110$\\ \hline
\end{tabular}
\]

By Weil's character sum estimation over finite fields, Peralta~\cite{Per92} gave a sufficient condition on primes for containing every $n$-bit string $t \in \bin^n$. 

\begin{theorem}[Peralta~\cite{Per92}]\label{thm:peralta}
Let $p$ be a prime. 
If $p\cdot \left(\frac{1}{2}\right)^n> n(\sqrt{p} + 3)$, then $\seq_p$ contains every $n$-bit string $t \in \bin^n$. 
\end{theorem}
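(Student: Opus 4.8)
The plan is to translate the statement ``$\seq_p$ contains $t$'' into the positivity of a character sum and then apply the Weil bound. Fix a target $t = (t_1,\dots,t_n) \in \bin^n$ and set $\epsilon_i := +1$ if $t_i = 1$ and $\epsilon_i := -1$ if $t_i = 0$. Write $\chi$ for the Legendre symbol modulo $p$, a multiplicative character of order $2$ on $\F_p$; for $y \not\equiv 0$ the quantity $\frac{1+\epsilon\,\chi(y)}{2}$ is the indicator of $\chi(y)=\epsilon$, and it equals $\frac12$ when $y \equiv 0$. I would study
\[
N := \sum_{x \in \F_p}\ \prod_{i=1}^{n}\frac{1+\epsilon_i\,\chi(x+i)}{2}.
\]

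\textbf{Step 1 ($N$ counts occurrences of $t$, up to a small error).} The hypothesis forces $p/2^n > n\sqrt p$, hence $\sqrt p > n\,2^n$ and in particular $n < p$; thus the only ``degenerate'' values of $x$, i.e.\ those with $x+i \equiv 0$ for some $i\in[n]$, are the $n$ residues $x\equiv -1,\dots,-n \pmod p$. Identifying $\F_p$ with $\{0,1,\dots,p-1\}$, for a non-degenerate $x$ the positions $x+1,\dots,x+n$ all lie in $[p-1]$, and the product equals $1$ if $t$ occurs in $\seq_p$ starting at position $x+1$ and $0$ otherwise; each of the $n$ degenerate $x$ contributes a product with exactly one factor equal to $\frac12$ and the rest in $\{0,1\}$, hence at most $\frac12$. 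Therefore $N \le N' + n/2$, where $N'$ is the number of occurrences of $t$ in $\seq_p$, and it suffices to prove $N > n/2$.

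\textbf{Step 2 (expand and apply Weil).} By multiplicativity of $\chi$,
\[
N = \frac{1}{2^{n}}\sum_{S \subseteq [n]}\Bigl(\prod_{i \in S}\epsilon_i\Bigr)\sum_{x \in \F_p}\chi\!\Bigl(\prod_{i\in S}(x+i)\Bigr).
\]
The term $S=\emptyset$ contributes $p/2^{n}$. For each nonempty $S$, the polynomial $f_S(X)=\prod_{i\in S}(X+i)$ has degree $|S|\ge 1$ with distinct roots modulo $p$ (as $n<p$), hence is squarefree and not a constant times a square; Weil's character-sum estimate then gives $\bigl|\sum_{x\in\F_p}\chi(f_S(x))\bigr| \le (|S|-1)\sqrt p \le (n-1)\sqrt p$. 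Summing over the $2^{n}-1$ nonempty $S$ yields
\[
N \ \ge\ \frac{p}{2^{n}} - \frac{2^{n}-1}{2^{n}}\,(n-1)\sqrt p \ >\ \frac{p}{2^{n}} - (n-1)\sqrt p .
\]
Under the hypothesis $p\,(1/2)^n > n(\sqrt p + 3)$ this gives $N > n(\sqrt p+3) - (n-1)\sqrt p = \sqrt p + 3n > n/2$, whence $N' \ge 1$ and $\seq_p$ contains $t$. Since $t$ was arbitrary, the theorem follows.

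The only non-elementary input is Weil's bound $\bigl|\sum_{x}\chi(f(x))\bigr| \le (\deg f - 1)\sqrt p$ for a squarefree non-square $f$ over $\F_p$; granting it, the argument is bookkeeping. The step that needs the most care is the error analysis in Step~1: one must verify that the degenerate terms (and the slight mismatch between elements of $\F_p$ and positions of $\seq_p$) contribute at most $n/2$, so that this error stays dominated by $\sqrt p$ and does not overwhelm the main term $p/2^{n}$.
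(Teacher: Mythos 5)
Your proof is correct, and it follows exactly the route the paper attributes to Peralta (the paper itself only cites the result, introducing it with ``By Weil's character sum estimation over finite fields''): expand the indicator $\prod_i \frac{1+\epsilon_i\chi(x+i)}{2}$, isolate the main term $p/2^n$, and bound each nonempty-subset term by the Weil estimate $(|S|-1)\sqrt{p}$, with the degenerate residues $x\equiv -1,\dots,-n$ contributing at most $n/2$. Your bookkeeping even shows the stated hypothesis $p\cdot(1/2)^n > n(\sqrt{p}+3)$ is more than enough (the condition $p/2^n > (n-1)\sqrt{p}+n/2$ already suffices), so the theorem follows.
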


We say that a prime $p$ is \emph{$n$-Peralta} if $\seq_p$ contains every $n$-bit string $t \in \bin^n$. 
We define \emph{the $n$-th Peralta prime $P_n$} as the smallest $n$-Peralta prime. 
The $n$-th Peralta primes for $1 \leq n \leq 8$ obtained by computer experiments are shown as follows:
\[
\begin{tabular}{c|cccccccc}\hline
$n$ & 1 & 2 & 3 & 4 & 5 & 6 & 7 & 8 \\ \hline
$P_n$ & 3 & 7 & 11 & 37 & 67 & 181 & 367 & 1091 \\ \hline
\end{tabular}
\]

Applying the Baker-Harman-Pintz theorem on prime gaps in \cite{BHP2001}, we obtain the following corollary. 

\begin{corollary}
\label{cor:peralta}
For any sufficiently large $n$, there exists an $n$-Peralta prime $p$ with $p \leq c+c^{0.525}$, where $c = (1+\sqrt{2})^2 n^2 2^{2n-2}$. 
Hence, $\log P_n = O(n)$ holds. 
\end{corollary}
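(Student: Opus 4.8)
The plan is to combine the sufficient condition in Theorem~\ref{thm:peralta} with the Baker--Harman--Pintz prime-gap estimate~\cite{BHP2001}. Set $h(p) := p\cdot 2^{-n} - n\sqrt{p} - 3n$, so that Theorem~\ref{thm:peralta} says: every prime $p$ with $h(p) > 0$ is $n$-Peralta. First I would determine the shape of $h$ as a function of the real variable $p > 0$. Since $h'(p) = 2^{-n} - \tfrac{n}{2\sqrt{p}}$, the function $h$ is strictly decreasing on $(0, n^2 2^{2n-2})$ and strictly increasing on $(n^2 2^{2n-2}, \infty)$; as $h(0) = -3n < 0$ and $h(p) \to \infty$, it has a unique positive root $p^\ast$, and $h(p) > 0$ holds exactly for $p > p^\ast$. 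Solving $h(p^\ast)=0$, a quadratic in $\sqrt{p^\ast}$, gives
\[
\sqrt{p^\ast} \;=\; \frac{n2^n}{2}\left(1 + \sqrt{1 + \frac{12}{n2^n}}\,\right) \;=\; n2^n + 3 + o(1),
\]
so $p^\ast = (1+o(1))\,n^2 2^{2n} = (4+o(1))\,n^2 2^{2n-2}$ as $n \to \infty$.

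Next I would compare $p^\ast$ with $c = (1+\sqrt{2})^2 n^2 2^{2n-2}$. Since $(1+\sqrt{2})^2 = 3 + 2\sqrt{2} > 4$, the estimate above gives $p^\ast < c$ for all sufficiently large $n$, with a fixed constant-factor margin. Now invoke the Baker--Harman--Pintz theorem~\cite{BHP2001}: for every sufficiently large real $y$, the interval $(y,\,y + y^{0.525}]$ contains a prime. Taking $y = c$ (which tends to $\infty$ with $n$), we obtain a prime $p$ with $c < p \le c + c^{0.525}$. Because $p > c > p^\ast$, we get $h(p) > 0$, hence $p$ is $n$-Peralta by Theorem~\ref{thm:peralta}, and therefore $P_n \le p \le c + c^{0.525}$. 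Finally $c + c^{0.525} \le 2c = O(n^2 2^{2n})$, so $\log P_n \le 2n + O(\log n) = O(n)$.

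There is no real obstacle here; the statement is Peralta's bound made effective by a modern prime-gap result. The two points that need a little care are (i) the monotonicity analysis of $h$, which is what lets a single prime above the threshold $p^\ast$ certify the $n$-Peralta property, and (ii) checking that $p^\ast$ lies below $c$ with enough slack that the prime furnished by Baker--Harman--Pintz in $(c, c+c^{0.525}]$ automatically exceeds $p^\ast$. I would also note in passing that any constant strictly above the limiting threshold ratio $p^\ast/(n^2 2^{2n-2}) \to 4$ would suffice; the particular value $(1+\sqrt{2})^2$ is kept only to match the constant in Peralta's original estimate~\cite{Per92}.
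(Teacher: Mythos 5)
Your proposal is correct and follows essentially the same route as the paper: invoke Theorem~\ref{thm:peralta}, solve the resulting quadratic condition in $\sqrt{p}$ to identify the threshold beyond which every prime is $n$-Peralta, check that this threshold lies below $c=(1+\sqrt{2})^2n^22^{2n-2}$, and then apply the Baker--Harman--Pintz theorem to find a prime in $[c, c+c^{0.525}]$. The only difference is cosmetic: the paper bounds the root crudely via $\sqrt{n^22^{2n-2}+3n2^n}<\sqrt{2}\,n2^{n-1}$, whereas you compute the root's asymptotics exactly (threshold $(4+o(1))n^22^{2n-2}$), which also correctly explains why any constant above $4$ would do and why $(1+\sqrt{2})^2$ merely matches Peralta's original constant.
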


\begin{proof}
From Theorem \ref{thm:peralta}, any prime $p$ satisfying
\begin{equation*}\label{eq:Peralta1}
\sqrt{p} > n2^{n-1} + \sqrt{n^22^{2n-2} + 3n2^n}
\end{equation*}
is $n$-Peralta. 
As $\sqrt{2}n2^{n-1} > \sqrt{n^22^{2n-2} + 3n2^n}$ for all $n \geq 3$, any prime $p$ satisfying
\[
\sqrt{p} > (1+\sqrt{2})n2^{n-1}
\]
is also $n$-Peralta. 
By the Baker-Harman-Pintz theorem on prime gaps, there exists a prime $p$ in $[c, c+c^{0.525}]$ for $c = (1+\sqrt{2})^2 n^2 2^{2n-2}$, as desired. 
\end{proof}

In Section \ref{ss:upper}, we improve the upper bound on Peralta primes by a constant factor $(1+\sqrt{2})^2$. 

\section{QR-PSM Protocols}\label{s:QRP}

\subsection{Definition of QR-PSM Protocols}\label{ss:Def_QRPSM}

We define quadratic residue based PSM protocols. 
It is a PSM protocol whose decoding function outputs the Legendre symbol of what is computed from messages. 

\begin{definition}[QR-PSM protocol]\label{definition:QR-PSM}
Let $\Pi = (n, X, Y, R, M, (\Enc_i)_{1\leq i \leq n}, \Dec)$ be a PSM protocol such that $Y = \{-1, 0, 1\}$. 
Let $p$ be a prime. 
We say that $\Pi$ is a \emph{quadratic residue based PSM (QR-PSM) protocol modulo $p$} if there exists a function $\phi: M \ra \Z_p$ such that for any $(m_1, \ldots, m_n) \in M$, 
\[
\Dec(m_1, m_2, \ldots, m_n) = \Qr{p}{\phi(m_1, m_2, \ldots, m_n)}. 
\]
\end{definition}

We remark that Feige-Kilian-Naor's protocol (see Section \ref{ss:FKN}) is a QR-PSM protocol modulo $7$. 
We also point out that Ishai's protocol (see Section \ref{ss:Ishai}) is a QR-PSM protocol modulo a prime $p$. 

Let $f: \bin^n \ra \bin$ be a Boolean function. 
We say that a QR-PSM protocol computes $f$ if it outputs $(-1)^{f(x)}$. 
Throughout this paper, we focus on the QR-PSM protocols for Boolean functions in this sense. 

\subsection{LQR-PSM Protocols}\label{ss:linear}

We say that a function $f: \bin^n \ra \bin$ is \emph{embedded} into a function $g: \Z^n \ra \Z$ if $g(x) = g(x')$ implies $f(x) = f(x')$ for any $x, x' \in \bin^n$. 
The function $g$ is called an \emph{embedding of $f$}. 
The \emph{embedding length} of $g$, denoted by $l(g)$, is defined as follows: 
\[
l(g) := \max_{x \in \bin^n}(g(x)) - \min_{x \in \bin^n}(g(x)) + 1.
\] 

If a function $f: \bin^n \ra \bin$ can be embedded into a linear function $g = a_1 x_1 + a_2 x_2 + \cdots + a_n x_n$, we obtain an efficient QR-PSM protocol which we call a \emph{linear QR-PSM (LQR-PSM) protocol}. 

\begin{definition}[Linear QR-PSM protocol]
Let $p$ be a prime and $a_0, a_1, a_2, \ldots, a_n \in \Z_p$. 
\emph{A linear QR-PSM (LQR-PSM) protocol modulo $p$}, denoted by $[a_0, a_1, a_2, \ldots, a_n]_p$, is a QR-PSM protocol $\Pi = (n, \bin^n, \{-1, 0, 1\}, R, \Z_{p}^n, (\Enc_i)_{1\leq i \leq n}, \Dec)$ modulo $p$ in the following. 
\begin{itemize}
\item The randomness space $R$ is 
\[
R = \left\{(r_0, r_1, r_2, \ldots, r_n) \in \Z_p^{n+1} \; \middle| \; r_0 \in \QR_p,~\sum_{i = 1}^n r_i \equiv 0 \;(\bmod \;p) \right\}.
\]
\item The encoding function $\Enc_i: \bin \times R \ra \Z_p$ is 
\[
\Enc_i(x_i, r) = 
\begin{cases}
r_0(a_0 + a_i x_i) + r_i \;(\bmod \;p) & \text{if $i = 1$},~\\
r_0a_i x_i + r_i \;(\bmod \;p) & \text{otherwise},
\end{cases}
\]
where $r  = (r_0, r_1, r_2, \ldots, r_n) \in R$ and $x_i \in \bin$. 
\item The decoding function $\Dec: (\Z_p)^n \ra \{-1, 0, 1\}$ is
\[
\Dec(m_1, m_2, \ldots, m_n) = \Qr{p}{\sum_{i=1}^n m_i}.
\]
\end{itemize}
\end{definition}

\begin{theorem}\label{theorem:LQR-PSM}
Let $f: \bin^n \ra \bin$ be a function. 
Let $g: \Z^n \ra \Z$ be an embedding of $f$ such that $g = a_1 x_1 + a_2 x_2 + \cdots + a_n x_n$ for some $a_1, a_2, \ldots, a_n \in \Z$. 
Then, there exists an LQR-PSM protocol for $f$ with communication complexity $n \cdot \log_2 P_{l(g)}$, where $P_{l(g)}$ is the $l(g)$-th Peralta prime. 
\end{theorem}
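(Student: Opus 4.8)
The plan is to construct an explicit LQR-PSM protocol $[a_0, a_1, \ldots, a_n]_p$ for a suitable prime $p$ and constant term $a_0$, and then verify that it computes $f$ and satisfies the security requirement. First I would choose $p := P_{l(g)}$, the $l(g)$-th Peralta prime, which by definition guarantees that the quadratic residue sequence $\seq_p$ contains every $l(g)$-bit string as a substring. The point of invoking this is that the $l(g)$ consecutive values that the linear form $g(x) = \sum a_i x_i$ takes as $x$ ranges over $\bin^n$ — namely the integers from $\min_x g(x)$ to $\max_x g(x)$ — form a window of length $l(g)$, and I want to place this window inside $\Z_p$ at a location where the Legendre symbols of the window's entries spell out exactly the truth table of $f$ (read through the embedding $g$). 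Concretely, since $f$ is embedded in $g$, the value $g(x)$ determines $f(x)$, so there is a well-defined target bit $b_j \in \bin$ for each $j$ in the window; the Peralta property gives an offset $a_0$ (a shift) such that for every $j$ in the window, $\qr{a_0 + j} = (-1)^{b_j}$, i.e. $\qr{a_0 + g(x)} = (-1)^{f(x)}$ for all $x \in \bin^n$. One technical point to handle: the Peralta definition speaks of substrings of $\seq_p$, whose entries encode $\qr{i}$ for $i \in [p-1]$ and never take the value $0$; I would note that $l(g) \le p - 1$ (which holds since $p = P_{l(g)}$ is already large) so the window fits, and that we may assume WLOG $g$ is chosen so none of the window values vanish modulo $p$, or simply absorb the boundary case since $f$ is Boolean and the decoder's $0$-output never occurs.

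Next I would set up the protocol as the instance $[a_0, a_1, \ldots, a_n]_p$ with the $a_i$ reduced modulo $p$ and $a_0$ the shift found above. For \textbf{correctness}: given inputs $x = (x_1, \ldots, x_n)$ and randomness $r = (r_0, r_1, \ldots, r_n) \in R$, the $i$-th message is $m_i = r_0 a_i x_i + r_i$ for $i \ge 2$ and $m_1 = r_0(a_0 + a_1 x_1) + r_1$, so summing and using $\sum_{i=1}^n r_i \equiv 0 \pmod p$ gives $\sum_i m_i \equiv r_0\bigl(a_0 + \sum_i a_i x_i\bigr) = r_0(a_0 + g(x)) \pmod p$. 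Since $r_0 \in \QR_p$ is a nonzero square, $\qr{r_0 \cdot (a_0 + g(x))} = \qr{a_0 + g(x)} = (-1)^{f(x)}$, which is exactly what $\Dec$ outputs. For \textbf{security}: I would show that for any fixed $x$, the distribution of the message vector $(m_1, \ldots, m_n)$ over uniform $r \in R$ depends only on $g(x)$ — hence only on $f(x)$, since $g$ embeds $f$. This follows because $(r_1, \ldots, r_n)$ is uniform on the hyperplane $\sum r_i = 0$, so $(m_1 - r_0 a_0[i{=}1] - r_0 a_i x_i)_i = (r_i)_i$ is uniform on that hyperplane; equivalently, conditioned on $r_0$, the vector $(m_1, \ldots, m_n)$ is uniform over the coset $\{(\mu_1, \ldots, \mu_n) : \sum_i \mu_i \equiv r_0(a_0 + g(x))\}$, and averaging over $r_0 \in \QR_p$ the resulting mixture depends on $x$ only through $g(x)$. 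Thus two inputs $x, x'$ with $f(x) = f(x')$ — a fortiori if $g(x) = g(x')$, but even when $g(x) \ne g(x')$ I need the slightly stronger statement — induce the same message distribution. I should double-check whether security needs $g(x) = g(x')$ or only $f(x) = f(x')$: the PSM security definition demands the latter, and the argument above gives identical distributions whenever $g(x) = g(x')$; when $g(x) \ne g(x')$ but $f(x) = f(x')$ the two mixtures over $\QR_p$ need not coincide pointwise, so the cleanest fix is to observe that the definition of "embedding" only constrains $f$ through $g$, meaning the protocol actually computes the function $x \mapsto \qr{a_0+g(x)}$ which factors through $g$; one then re-examines the definition of QR-PSM for Boolean $f$ to confirm it suffices that the protocol computes \emph{some} function extending $f$ via the embedding — this is the subtlety I'd flag.

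Finally, the communication complexity: each $M_i = \Z_p$ contributes $\log_2 p$ bits, so $\sum_{i=1}^n \log_2(\#M_i) = n \log_2 p = n \log_2 P_{l(g)}$, matching the claimed bound.

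I expect the main obstacle to be the \textbf{security argument and its interaction with the embedding}: making precise that the message distribution depends on $x$ only through an admissible quantity, and reconciling the "$g(x) = g(x')$" level of indistinguishability that the construction naturally delivers with the "$f(x) = f(x')$" level that the PSM definition asks for. The resolution is to note that $g$ being an embedding means $f$ is constant on each fiber of $g$, so a protocol that is perfectly private with respect to $g$ is in particular a valid PSM protocol for $f$ (the referee learns at most $g(x)$, which reveals exactly $f(x)$ and possibly more, but "more" about $g(x)$ is allowed only if... ) — and here one must be careful: if $g(x)$ leaks strictly more than $f(x)$, the protocol is \emph{not} secure as a PSM for $f$. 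Hence the genuinely important hypothesis is that the construction must be shown to leak no more than $(-1)^{f(x)}$; I would therefore argue that the mixture over $r_0 \in \QR_p$ of the cosets indexed by $r_0 \cdot (a_0 + g(x))$ in fact depends only on whether $a_0 + g(x) \in \QR_p$ or $\QN_p$ — because $\{r_0 (a_0+g(x)) : r_0 \in \QR_p\}$ equals $\QR_p$ or $\QN_p$ according to the Legendre symbol of $a_0 + g(x)$ — so the message distribution depends on $x$ only through $\qr{a_0+g(x)} = (-1)^{f(x)}$, which is precisely $f(x)$. That observation is the crux and closes the proof.
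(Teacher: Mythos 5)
Your construction is exactly the paper's: choose $p = P_{l(g)}$, use the Peralta property to obtain an offset $a_0$ with $\qr{a_0+g(x)} = (-1)^{f(x)}$ for all $x \in \bin^n$, run the protocol $[a_0, a_1, \ldots, a_n]_p$, and verify correctness via $\sum_i m_i \equiv r_0(a_0+g(x)) \pmod p$ with $r_0 \in \QR_p$, giving communication $n \log_2 P_{l(g)}$. Your additional security analysis --- that the message distribution depends on $x$ only through $\qr{a_0+g(x)} = (-1)^{f(x)}$, since $\{r_0 c : r_0 \in \QR_p\}$ equals $\QR_p$ or $\QN_p$ according to the Legendre symbol of $c$ --- is correct and supplies a step the paper's proof leaves implicit.
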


\begin{proof}
Set $p := P_{l(g)}$.  
Since $p$ is the $l(g)$-th Peralta prime, there exists $a_0 \in \Z_p$ such that $\Qr{p}{a_0 + g(x)} = (-1)^{f(x)}$ for all $x \in \bin^n$. 
We claim that $[a_0, a_1, a_2, \ldots, a_n]_p$ is an LQR-PSM protocol for $f$. 
By setting $m_i := \Enc_i(x_i, r)$, we have 
\[
(m_1, m_2, \ldots, m_n) = (r_0(a_0 + a_1 x_1) + r_1, r_0a_2 x_2 + r_2, \ldots, r_0a_n x_n + r_n).
\]
Since $r_0$ is a nonzero quadratic residue, we have
\[
\Qr{p}{\sum_{i=1}^n m_i} = \Qr{p}{r_0(a_0 + a_1 x_1 + a_2 x_2 + \cdots + a_n x_n)} = \Qr{p}{r_0(a_0 + g(x))} =  \Qr{p}{a_0 + g(x)}.
\]
Thus, it correctly computes $f$. 
The communication complexity of the protocol is $n \cdot \log_2 P_{l(g)}$. 
\end{proof}

Theorem \ref{theorem:LQR-PSM} implies a protocol for any symmetric function. 

\begin{corollary}\label{corollary:symmetric}
For any symmetric function $f: \bin^n \ra \bin$, there exists an LQR-PSM protocol with communication complexity $n\cdot  \log_2 P_{n+1} = O(n^2)$. 
\end{corollary}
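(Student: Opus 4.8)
The plan is to apply Theorem \ref{theorem:LQR-PSM} with a carefully chosen linear embedding of a symmetric function. The key observation is that a symmetric function $f:\bin^n\ra\bin$ depends only on the Hamming weight $\sum_{i=1}^n x_i$ of its input. Thus I would take $g = x_1 + x_2 + \cdots + x_n$, i.e.\ all coefficients $a_i = 1$. For any $x, x' \in \bin^n$ with $g(x) = g(x')$, the two inputs have the same Hamming weight, so $f(x) = f(x')$ by symmetry; hence $g$ is indeed an embedding of $f$ in the sense of Section \ref{ss:linear}.

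Next I would compute the embedding length $l(g)$. Since $g(x)$ ranges over $\{0, 1, \ldots, n\}$ as $x$ ranges over $\bin^n$, we have $\max_{x}g(x) = n$ and $\min_{x}g(x) = 0$, so $l(g) = n - 0 + 1 = n+1$. Plugging this into Theorem \ref{theorem:LQR-PSM}, we immediately obtain an LQR-PSM protocol for $f$ with communication complexity $n \cdot \log_2 P_{l(g)} = n \cdot \log_2 P_{n+1}$.

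Finally, I would bound $\log_2 P_{n+1}$. By Corollary \ref{cor:peralta}, $\log P_m = O(m)$, so in particular $\log_2 P_{n+1} = O(n+1) = O(n)$. Therefore the communication complexity $n \cdot \log_2 P_{n+1}$ is $O(n^2)$, as claimed.

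I do not anticipate any serious obstacle here: the statement is essentially a direct corollary, and both the verification that $g$ is an embedding and the computation of $l(g)$ are routine. The only point requiring a little care is making sure the asymptotic bound $\log_2 P_{n+1} = O(n)$ is invoked correctly (using Corollary \ref{cor:peralta} rather than the weaker-constant bound from Corollary \ref{cor:peralta}'s hypothesis), but since that corollary is already established in the excerpt, this is immediate. If one wanted an explicit constant rather than just the $O(n^2)$ bound, one could instead invoke the refined estimate $P_n \leq (1+o(1))n^2 2^{2n-2}$ promised in Section \ref{ss:upper}, giving $\log_2 P_{n+1} \leq 2(n+1) + 2\log_2(n+1) + o(1)$ and hence a communication complexity of $2n^2 + O(n\log n)$.
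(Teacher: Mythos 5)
Your proposal is correct and follows exactly the paper's route: embed the symmetric function into $g = x_1 + \cdots + x_n$ with $l(g) = n+1$, apply Theorem \ref{theorem:LQR-PSM}, and use $\log_2 P_{n+1} = O(n)$ from Corollary \ref{cor:peralta}. The paper states this in one line; your write-up simply fills in the routine verifications.
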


\begin{proof}
It follows from Theorem \ref{theorem:LQR-PSM} since any symmetric function is embedded to a linear function $g(x_1, x_2, \ldots, x_n) = x_1 + x_2 + \cdots + x_n$ of embedding length $n+1$. 
\end{proof}

\emph{A weighted threshold function $f_{\bm{w}, t}$ associated with $\bm{w} = (w_1, w_2, \ldots, w_n) \in \Z^n$ and $t \in \N$} is defined as 
\[
f_{\bm{w}, t}(x_1, x_2, \ldots, x_n) = 
\begin{cases}
1 & \text{if $\sum_{i=1}^n w_i x_i \geq t$,} \\
0 & \text{otherwise.}
\end{cases}
\]

\begin{corollary}\label{corollary:weighted}
For any $\bm{w}\in \Z^n$ and $t \in \N$, there exists an LQR-PSM protocol for the weighted threshold function $f_{\bm{w}, t}: \bin^n \ra \bin$ associated with $\bm{w}, t$ with communication complexity $n\cdot  \log_2 P_{W+1} = O(n\cdot W)$ for $W = \sum_{i=1}^n |w_i|$. 
\end{corollary}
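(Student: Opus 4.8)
The plan is to realize $f_{\bm{w},t}$ as a function embedded into the linear form $g(x_1,\dots,x_n) = w_1 x_1 + \cdots + w_n x_n$ and then invoke Theorem \ref{theorem:LQR-PSM} directly. First I would check that $g$ is indeed an embedding of $f_{\bm{w},t}$: by definition $f_{\bm{w},t}(x) = 1$ exactly when $g(x) \ge t$, so the integer $g(x)$ determines $f_{\bm{w},t}(x)$, and in particular $g(x) = g(x')$ forces $f_{\bm{w},t}(x) = f_{\bm{w},t}(x')$, which is the required implication.

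The second step is to compute the embedding length $l(g)$. Over $x \in \bin^n$, the form $g$ is maximized by setting $x_i = 1$ exactly for the indices with $w_i > 0$, giving $\max_{x \in \bin^n} g(x) = \sum_{i : w_i > 0} w_i$, and minimized symmetrically, giving $\min_{x \in \bin^n} g(x) = \sum_{i : w_i < 0} w_i$. Hence
\[
l(g) = \sum_{i : w_i > 0} w_i - \sum_{i : w_i < 0} w_i + 1 = \sum_{i=1}^n |w_i| + 1 = W + 1.
\]

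With the embedding and its length in hand, Theorem \ref{theorem:LQR-PSM} (applied with $a_i = w_i$ for $1 \le i \le n$) yields an LQR-PSM protocol for $f_{\bm{w},t}$ of communication complexity $n \cdot \log_2 P_{l(g)} = n \cdot \log_2 P_{W+1}$. Finally I would bound $\log_2 P_{W+1}$: by Corollary \ref{cor:peralta} we have $\log P_m = O(m)$, so $\log_2 P_{W+1} = O(W+1) = O(W)$ (one may assume $W \ge 1$, since $W = 0$ makes every $w_i$ zero and $f_{\bm{w},t}$ constant, which is trivial), giving the claimed bound $n \cdot \log_2 P_{W+1} = O(n \cdot W)$.

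I do not expect a genuine obstacle here, since the statement is essentially a specialization of Theorem \ref{theorem:LQR-PSM}; the only point requiring a little care is the embedding-length computation when some weights are negative, where one must remember that $l(g)$ is defined via $\max - \min$ of $g$ over the Boolean cube rather than via $\max |g|$, so the two one-sided sums combine into $\sum_{i=1}^n |w_i|$ rather than into twice that quantity or a max-type expression.
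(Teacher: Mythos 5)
Your proposal is correct and follows the same route as the paper: embed $f_{\bm{w},t}$ into $g(x)=\sum_i w_i x_i$, note $l(g)=\sum_i |w_i|+1=W+1$, and apply Theorem \ref{theorem:LQR-PSM}, with the $O(nW)$ bound coming from $\log P_m = O(m)$. The extra care you take with the $\max-\min$ computation for negative weights is exactly the point the paper leaves implicit.
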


\begin{proof}
It follows from Theorem \ref{theorem:LQR-PSM} since a weighted threshold function associated with $\bm{w}, t$ is embedded to a linear function $g(x_1, x_2, \ldots, x_n) = w_1x_1 + w_2x_2 + \cdots + w_nx_n$ of embedding length $\sum_{i=1}^n |w_i|+1$. 
\end{proof}

Theorem \ref{theorem:LQR-PSM} also implies Ishai's protocol (see Subsection \ref{ss:Ishai}). 

\begin{corollary}[Ishai \cite{Ishai13}]\label{corollary:any}
For any function $f: \bin^n \ra \bin$, there exists an LQR-PSM protocol with communication complexity $n\cdot  \log_2 P_{2^n} = O(n \cdot 2^n)$. 
\end{corollary}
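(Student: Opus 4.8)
The plan is to apply Theorem~\ref{theorem:LQR-PSM} with the trivial binary-weight embedding, exactly as in Ishai's original construction. First I would observe that any function $f: \bin^n \ra \bin$ is embedded into the linear function $g(x_1, x_2, \ldots, x_n) = \sum_{i=1}^n 2^{i-1} x_i$, since the map $x \mapsto \sum_{i=1}^n 2^{i-1}x_i$ is injective on $\bin^n$ (it is the usual binary expansion), so $g(x) = g(x')$ forces $x = x'$ and in particular $f(x) = f(x')$. Hence $g$ is an embedding of every $f$.

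Next I would compute the embedding length: $\max_{x \in \bin^n} g(x) = \sum_{i=1}^n 2^{i-1} = 2^n - 1$ and $\min_{x \in \bin^n} g(x) = 0$, so $l(g) = (2^n - 1) - 0 + 1 = 2^n$. Plugging $g$ and $l(g) = 2^n$ into Theorem~\ref{theorem:LQR-PSM} yields an LQR-PSM protocol for $f$ with communication complexity $n \cdot \log_2 P_{2^n}$.

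Finally I would bound $\log_2 P_{2^n}$ using Corollary~\ref{cor:peralta}: since $\log P_m = O(m)$, we get $\log_2 P_{2^n} = O(2^n)$, and therefore the communication complexity is $n \cdot \log_2 P_{2^n} = O(n \cdot 2^n)$, matching the table entry for Ishai's protocol. There is essentially no obstacle here — the only things to be slightly careful about are (i) checking that the domain convention $g: \Z^n \to \Z$ in Theorem~\ref{theorem:LQR-PSM} is met by the integer-coefficient linear form $\sum 2^{i-1} x_i$, which it clearly is, and (ii) that $P_{2^n}$ is well-defined, i.e.\ some $2^n$-Peralta prime exists, which is guaranteed by Corollary~\ref{cor:peralta}. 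The identification with Ishai's protocol of Section~\ref{ss:Ishai} is then immediate by unwinding the definition of $[a_0, a_1, \ldots, a_n]_p$ with $a_i = 2^{i-1}$.
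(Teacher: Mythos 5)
Your proposal is correct and follows exactly the paper's argument: embed $f$ via the binary-expansion linear form $g(x)=\sum_{i=1}^n 2^{i-1}x_i$, note $l(g)=2^n$, and invoke Theorem~\ref{theorem:LQR-PSM} together with the bound $\log P_{2^n}=O(2^n)$ from Corollary~\ref{cor:peralta}. No differences worth noting.
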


\begin{proof}
It follows from Theorem \ref{theorem:LQR-PSM} since any function $f: \bin^n \ra \bin$ is embedded to a linear function $g(x_1, x_2, \ldots, x_n) = x_1 + 2x_2 + \cdots 2^{i-1}x_i+ \cdots + 2^{n-1}x_n$ of embedding length $2^n$. 
\end{proof}

We also obtain an LQR-PSM protocol for a composition of symmetric functions. 

\begin{corollary}\label{corollary:composition}
Let $h: \bin^m \ra \bin$ be any function and $g_i: \bin^k \ra \bin$ ($1 \leq i \leq m$) be symmetric functions. 
Set $n = mk$. 
Define a function $f: \bin^n \ra \bin$ as follows: 
\[
f(x_1, x_2, \ldots, x_n) = h(g_1(x_1, \ldots, x_k), g_2(x_{k+1}, \ldots, x_{2k}), \ldots, g_m(x_{n-k+1}, \ldots, x_n)).
\]
Then, there exists an LQR-PSM protocol for $f$ with communication complexity $n\cdot \log_2 P_L = O(n\cdot L)$ for $L = (k+1)^{n/k}$. 
\end{corollary}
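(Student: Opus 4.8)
The plan is to derive this from Theorem~\ref{theorem:LQR-PSM} by constructing an explicit linear embedding of $f$ of embedding length exactly $L = (k+1)^{n/k}$. First I would record the easy observation underlying Corollary~\ref{corollary:symmetric}: each symmetric function $g_i$ depends only on the Hamming weight of its block of inputs, so $g_i$ is embedded into the block sum $s_i := x_{(i-1)k+1} + x_{(i-1)k+2} + \cdots + x_{ik}$, and $s_i$ ranges over $\{0,1,\ldots,k\}$.

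Next I would ``pack'' the $m$ block sums into a single linear form by a base-$(k+1)$ encoding. Define $g : \Z^n \ra \Z$ by $g(x_1,\ldots,x_n) := \sum_{i=1}^m (k+1)^{i-1} s_i$; equivalently, $g$ is the linear function $a_1 x_1 + \cdots + a_n x_n$ whose coefficient on each variable $x_j$ lying in block $i$ (that is, $(i-1)k < j \leq ik$) equals $a_j = (k+1)^{i-1}$. I would then argue $g$ is an embedding of $f$: since every $s_i \in \{0,\ldots,k\}$, no carries occur, so the integer $g(x)$ determines the digit tuple $(s_1,\ldots,s_m)$ uniquely; hence $g(x) = g(x')$ forces $s_i = s_i'$ for all $i$, which forces $g_i(\text{block }i) = g_i(\text{block }i')$ for all $i$, and therefore $f(x) = h(g_1,\ldots,g_m) = h(g_1',\ldots,g_m') = f(x')$.

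I would then compute the embedding length. Clearly $\min_{x \in \bin^n} g(x) = 0$ (all $x_j = 0$) and $\max_{x \in \bin^n} g(x) = \sum_{i=1}^m (k+1)^{i-1} k = k \cdot \frac{(k+1)^m - 1}{k} = (k+1)^m - 1$ (all $x_j = 1$). Hence $l(g) = (k+1)^m - 1 - 0 + 1 = (k+1)^m = (k+1)^{n/k} = L$. Applying Theorem~\ref{theorem:LQR-PSM} to the embedding $g$ of $f$ yields an LQR-PSM protocol for $f$ with communication complexity $n \cdot \log_2 P_{l(g)} = n \cdot \log_2 P_L$. Finally, by Corollary~\ref{cor:peralta} we have $\log P_L = O(L)$, so $n \cdot \log_2 P_L = O(n \cdot L)$, which is the claimed bound.

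The argument is essentially a composition of earlier results, so there is no genuine obstacle; the only points needing care are the no-carry (uniqueness of base-$(k+1)$ digits) claim that makes $g$ an embedding, and the geometric-series evaluation giving $\max_x g(x) = (k+1)^m - 1$ and hence $l(g) = L$ rather than $L-1$ or $L+1$. Both are routine once the encoding is set up correctly.
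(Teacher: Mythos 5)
Your proposal is correct and follows the same route as the paper: the identical base-$(k+1)$ linear packing $g$ of the block sums, the same computation $l(g)=(k+1)^m=L$, and the same application of Theorem~\ref{theorem:LQR-PSM} together with $\log P_L = O(L)$. The only difference is that you spell out the no-carry embedding argument and the geometric-series evaluation, which the paper leaves implicit.
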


\begin{proof}
We can observe that the function $f$ can be embedded to a linear function $g: \Z^n \ra \Z$ in the following:
\[
g(x_1, x_2, \ldots, x_n) = \sum_{i=1}^{k} x_i + \sum_{i=k+1}^{2k} (k+1)x_i + \sum_{i=2k+1}^{3k} (k+1)^2 x_i + \cdots + \sum_{i=(m-1)k+1}^{mk} (k+1)^{m-1}x_i.
\]
We have 
\[
l(g) = 1 + k + (k+1)k + (k+1)^2k + \cdots + (k+1)^{m-1}k = (k+1)^{n/k}.
\]
From Theorem \ref{theorem:LQR-PSM}, we have an LQR-PSM protocol with communication complexity $n\cdot \log_2 P_L = O(n\cdot L)$ for $L = (k+1)^{n/k}$. 
\end{proof}

\begin{remark}
By setting $(m, k)= (1, n)$, we obtain Corollary \ref{corollary:symmetric}. 
By setting $(m, k)= (n, 1)$, we obtain Corollary \ref{corollary:any}. 
In this sense, Corollary \ref{corollary:composition} is a generalization of Corollaries \ref{corollary:symmetric} and \ref{corollary:any}. 
\end{remark}

By computer experiment, we search LQR-PSM protocols for several symmetric functions with minimum communication complexity in Table \ref{tab:protocol_list}: AND is the logical AND function, XOR is the logical exclusive OR function, EQ is a function that outputs $1$ if and only if all bits are equal, and MAJ is a function which outputs $1$ if and only if half or more bits are $1$. 
Note that these protocols are more efficient than those of Corollary \ref{corollary:symmetric}. 

\begin{table}[htp]
\caption{The list of LQR-PSM protocols for AND, XOR, EQ, and MAJ.}
\begin{tabular}{c|cccc}\hline
$n$ & AND & XOR & EQ & MAJ\\ \hline
$2$ & $[2,1,1]_5$ & $[2,2,4]_5$ & $[1,1,2]_5$ & $[2,2,2]_5$ \\
$3$ & $[6,1,1,1]_{11}$ & $[6,3,3,3]_7$ & $[1,1,1,1]_5$ & $[3,3,3,2]_7$\\
$4$ & $[5,1,1,1,1]_{13}$ & $[12,1,1,1,7]_{17}$ & $[5,1,1,1,1]_{11}$ & $[6,2,2,2,2]_{11}$ \\
$5$ & $[11,1,1,1,1,1]_{41}$ & $[14,2,2,2,2,2]_{19}$ & $[4,1,1,1,1,1]_{13}$ & $[6,2,2,2,2,2]_{11}$ \\
$6$ & $[18,1,1,1,1,1,1]_{53}$ & $[15,1,1,1,1,1,6]_{41}$ & $[10,1,1,1,1,1,1]_{41}$ & $[21,3,3,3,3,3,3]_{31}$ \\
$7$ & $[52,1,1,1,1,1,1,1]_{83}$ & $[35,1,1,1,1,1,1,1]_{79}$ & $[17,1,1,1,1,1,1,1]_{53}$ & $[21,3,3,3,3,3,3,2]_{31}$\\ \hline
\end{tabular}
\label{tab:protocol_list}
\end{table}

\subsection{QR-PSM Protocols from DREs}\label{ss:construction}

In this subsection, we construct QR-PSM protocols from DREs. 

\begin{theorem}\label{theorem:construction}
Let $f: \bin^n \ra \bin$ be a function, and $g: \Z^n \ra \Z$ an embedding of $f$. 
Let $h: \Z^{n+2} \ra \Z$ be a function such that $h(\x_1, \x_2, \ldots, \x_{n+2}) := (g(\x_1, \x_2, \ldots, \x_n) + \x_{n+1}) \cdot \x_{n+2}$. 
Then, there exists a QR-PSM protocol computing $f$ with communication complexity $O(\dare{h} \cdot l(g))$. 
\end{theorem}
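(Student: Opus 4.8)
My plan is to build the QR-PSM protocol for $f$ by wrapping a DRE of the function $h$ inside the Legendre-symbol decoding mechanism, generalizing the idea behind Theorem~\ref{theorem:LQR-PSM}. The key observation is the same as in the linear case: if $r_0$ is a uniformly random nonzero quadratic residue modulo a suitable prime $p$, then $\Qr{p}{r_0 \cdot (a_0 + g(x))} = \Qr{p}{a_0 + g(x)}$, so multiplying by $r_0$ is a ``free'' randomization that does not disturb the decoding. The point of introducing $h$ is that the product $(g(\x_1,\ldots,\x_n) + \x_{n+1})\cdot \x_{n+2}$ captures exactly this structure when we substitute $\x_{n+1} \la a_0$ and $\x_{n+2} \la r_0$. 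So $h$, evaluated at $(x_1,\ldots,x_n,a_0,r_0)$, equals $r_0(a_0 + g(x))$, whose Legendre symbol is $(-1)^{f(x)}$ by the choice of $a_0$ — which exists because we will take $p$ to be an $l(g)$-Peralta prime (this is why $l(g)$ enters the complexity bound).

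The construction proceeds as follows. First, choose $p := P_{l(g)}$, so that there is $a_0 \in \Z_p$ with $\Qr{p}{a_0 + g(x)} = (-1)^{f(x)}$ for all $x \in \bin^n$. Consider $h_p : \Z_p^{n+2} \ra \Z_p$ and take a DRE $\hat{h}_p$ of it with output length at most $\dare{h}$; write it in decomposed form $\hat{h}_p(\x, r) = (\hat{h}_0(r), \hat{h}_1(\x_1,r), \ldots, \hat{h}_{n+2}(\x_{n+2}, r))$. Now define the PSM protocol: the shared randomness consists of the DRE randomness $r \in \Z_p^m$ together with a uniformly random $r_0 \in \QR_p$; player $P_i$ for $1 \le i \le n$ holds $x_i$ and, knowing $a_0$ (a public constant), $r_0$, and $r$, computes the block $\hat{h}_i(x_i, r)$ — and we fold the fixed-input blocks $\hat{h}_0(r)$ and $\hat{h}_{n+1}(a_0, r)$ and $\hat{h}_{n+2}(r_0, r)$ into, say, $P_1$'s message. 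Thus $P_1$ sends $(\hat{h}_0(r), \hat{h}_1(x_1,r), \hat{h}_{n+1}(a_0,r), \hat{h}_{n+2}(r_0,r))$ and $P_i$ for $i \ge 2$ sends $\hat{h}_i(x_i, r)$. The referee, holding all these blocks, runs the DRE decoder $\Dec_{\hat{h}}$ to recover $h_p(x_1,\ldots,x_n,a_0,r_0) = r_0(a_0 + g(x)) \bmod p$, and then outputs $\Qr{p}{\,\cdot\,}$ of that value; this is a QR-PSM protocol modulo $p$ with $\phi$ equal to (decoder composed with the identification of messages as DRE blocks).

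For correctness, the DRE decoder recovers $r_0(a_0+g(x))$, whose Legendre symbol is $\Qr{p}{r_0}\cdot \Qr{p}{a_0+g(x)} = (-1)^{f(x)}$ since $r_0 \in \QR_p$. For security, I would argue in two layers: conditioned on any fixed $r_0$, the joint distribution of the message blocks is exactly the distribution of $\hat{h}_p((x_1,\ldots,x_n,a_0,r_0), r)$ over random $r$, which by DRE security depends only on $h_p(x_1,\ldots,x_n,a_0,r_0) = r_0(a_0+g(x))$; and since $r_0$ is itself part of the randomness and $\Qr{p}{r_0(a_0+g(x))}$ depends on $x$ only through $(-1)^{f(x)}$, the range of reachable values $r_0(a_0+g(x))$ — namely the coset $(a_0+g(x))\cdot \QR_p$ — depends on $x$ only through whether $a_0+g(x)$ is a residue or a nonresidue, i.e.\ only through $f(x)$ (the case $a_0+g(x)\equiv 0$ requires a small separate check, but then the product is $0$ regardless of $r_0$). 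Hence the overall message distribution depends on $x$ only via $f(x)$. The communication complexity is $(\text{number of blocks}) \cdot \log_2 p = O(\dare{h}) \cdot O(l(g)) = O(\dare{h}\cdot l(g))$, using $\log_2 P_{l(g)} = O(l(g))$ from Corollary~\ref{cor:peralta}.

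The main obstacle I anticipate is the security proof, specifically making rigorous that averaging over the nonzero-quadratic-residue $r_0$ genuinely collapses the dependence on $x$ down to $f(x)$: one must be careful that the DRE security guarantee is stated per fixed input, so the argument is really ``for each $r_0$, the conditional message distribution is a function of $r_0(a_0+g(x))$ only'', and then one needs that as $r_0$ ranges over $\QR_p$ the multiset $\{r_0(a_0+g(x)) : r_0 \in \QR_p\}$ is the same (as a multiset) for any two inputs $x, x'$ with $f(x)=f(x')$ — which holds because $a_0+g(x)$ and $a_0+g(x')$ then have the same Legendre symbol, so $(a_0+g(x))\QR_p = (a_0+g(x'))\QR_p$ as subsets of $\Z_p$ (or both sides are $\{0\}$). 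Handling the degenerate $a_0+g(x)=0$ case and the bookkeeping of which DRE blocks are routed to which player are the only other fiddly points.
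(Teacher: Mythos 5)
Your construction is essentially identical to the paper's proof: choose $p=P_{l(g)}$ and an offset $a_0$ with $\Qr{p}{a_0+g(x)}=(-1)^{f(x)}$, take a DRE of $h$ over $\Z_p$ with output length $\dare{h}$, route the blocks $\hat{h}_0$, $\hat{h}_{n+1}(a_0,\cdot)$, $\hat{h}_{n+2}(r_0,\cdot)$ to the first player, and decode via the Legendre symbol, giving complexity $s\log_2 p=O(\dare{h}\cdot l(g))$. Your two-layer security argument (per-$r_0$ DRE security plus the coset identity $(a_0+g(x))\QR_p=(a_0+g(x'))\QR_p$ when $f(x)=f(x')$, with the zero case excluded by the choice of $a_0$) is a correct and welcome elaboration of the paper's one-line claim that security follows directly from the DRE, and your substitution $\hat{h}_1(x_1,r)$ is the consistent form of what the paper writes as $\hat{h}_1(r'\cdot x_1,r)$.
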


\begin{proof}
From Theorem \ref{thm:peralta} and Corollary \ref{cor:peralta}, there exists a prime $p$ with $\log_2 p = O(l(g))$ containing every $l(g)$-bit string. 
Since $g(x) = g(x')$ implies $f(x) = f(x')$, we can take an offset $a_0 \in \Z_p$ such that $\Qr{p}{a_0 + g(x)} = (-1)^{f(x)}$ for all $x \in \bin^n$. 

From the assumption of the statement, there exists a DRE of $h = (g + x_{n+1}) \cdot x_{n+2}$ with output length $\dare{h}$. 
Set $s := \dare{h}$. Let $\hat{h}: \Z_p^{n+2} \times \Z_p^m \ra \Z_p^s$ be the DRE of $h = (g + x_{n+1}) \cdot x_{n+2}$ with output complexity $s$. 
It has the following form:
\[
\hat{h}((x_1, x_2, \ldots, x_{n+2}), r) = (\hat{h}_0(r), \hat{h}_1(x_1, r), \hat{h}_2(x_2, r), \ldots, \hat{h}_{n+2}(x_{n+2}, r))
\]
where $\hat{h}_0: \Z_{p}^m \ra \Z_{p}^{s_0}$ and $\hat{h}_i:  \Z_{p} \times  \Z_{p}^m \ra \Z_{p}^{s_i}$ ($1\leq i \leq n+2$) are functions such that $\sum_{i=0}^{n+2} s_i = s$. 
Let ${\sf dec}: \Z_{p}^s \ra \Z_{p}$ be the decryption function of the DRE. 

The QR-PSM protocol $\Pi = (n, \bin^n, \{-1, 0, 1\}, R, M, (\Enc_i)_{1\leq i \leq n}, \Dec)$ modulo $p$ is defined as follows:
\begin{itemize}
\item $M_1 = \Z_{p}^{s_0 + s_1 + s_{n+1} + s_{n+2}}$ and $M_i =  \Z_{p}^{s_i}$ for all $2 \leq i \leq n$. 
\item $R = \Z_{p}^m \times \QR_{p}$. (Recall that $\QR_{p}$ is the set of nonzero quadratic residues modulo $p$). 
\item $\Enc_1(x_1, (r, r')) = (\hat{h}_0(r), \hat{h}_1(r' \cdot x_1, r), \hat{h}_{n+1}(a_0, r), \hat{h}_{n+2}(r', r))$ and $\Enc_i(x_i, (r, r')) = \hat{h}_i(x_i, r)$ for $2 \leq i \leq n$, where $r \in \Z_{p}^m$ and $r' \in \QR_{p}$. 
\item $\Dec(m_1, m_2, \ldots, m_n) = \Qr{p}{{\sf dec}(m_1, m_2, \ldots, m_n)}$. 
\end{itemize}
The correctness of the protocol follows from the correctness of the DRE, i.e., ${\sf dec}(m_1, m_2, \ldots, m_n) = (g(x) + a_0) \cdot r'$. 
The security of the protocol follows from the security of the DRE $\hat{h}$ directly. 
The communication complexity of the protocol is $s\log_2 p = O(\dare{h} \cdot l(g))$. 
\end{proof}

\begin{corollary}\label{corollary:formula}
Let $f: \bin^n \ra \bin$ be a function which is embedded into a degree-$d$ polynomial $g: \Z^n \ra \Z$ having $m$ terms. 
Then, there exists a QR-PSM protocol computing $f$ with communication complexity $m^2 \cdot d \cdot 2^{O(\sqrt{\log d})}$. 
\end{corollary}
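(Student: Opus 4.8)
The plan is to derive Corollary \ref{corollary:formula} by instantiating Theorem \ref{theorem:construction} with the given embedding $g$ and bounding the DRE complexity $\dare{h}$ of the auxiliary function $h = (g + x_{n+1})\cdot x_{n+2}$ through Theorem \ref{thm:DRE_poly}.

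First I would note that $h$ is again a polynomial over $\Z$: since $g$ is a degree-$d$ polynomial with $m$ terms and $x_{n+1}, x_{n+2}$ are fresh variables, $g + x_{n+1}$ has $m+1$ terms and degree $d$, hence $h = (g + x_{n+1})\cdot x_{n+2}$ has $m+1$ terms and degree $d+1$. For every prime $p$, the reduction $h_p$ is a degree-$\le(d+1)$ polynomial over $\Z_p$ with at most $m+1$ terms, so Theorem \ref{thm:DRE_poly} applied with the ring $A = \Z_p$ yields a DRE of $h_p$ of output length at most $(m+1)(d+1)\,2^{O(\sqrt{\log(d+1)})}$. Since the implied constant does not depend on $p$, this gives $\dare{h} \le (m+1)(d+1)\,2^{O(\sqrt{\log(d+1)})}$, which is $O\bigl(m d\,2^{O(\sqrt{\log d})}\bigr)$ once the additive $+1$'s and the constants are absorbed (the finitely many cases with $d \le 1$ being trivial).

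Next I would bound the embedding length. Writing $g = \sum_{i=1}^{m} c_i \prod_{j\in S_i} x_j$, each monomial takes values in $\{0, c_i\}$ on $\bin^n$, so $l(g) \le \sum_{i=1}^m |c_i| + 1$; under the standing convention that the coefficients of $g$ have bounded size (as in the intended applications, e.g.\ each term having coefficient $\pm 1$), this is $O(m)$. Plugging the bounds on $\dare{h}$ and $l(g)$ into Theorem \ref{theorem:construction}, the resulting QR-PSM protocol computes $f$ with communication complexity $O\bigl(\dare{h}\cdot l(g)\bigr) = O\bigl(m d\,2^{O(\sqrt{\log d})}\cdot m\bigr) = m^2\cdot d\cdot 2^{O(\sqrt{\log d})}$, as claimed.

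This is essentially a two-step chaining of earlier results, so I do not expect a genuine obstacle; the point that needs the most care is the bookkeeping that produces the $m^2$ factor---one factor of $m$ coming from the term count of $h$ inside $\dare{h}$ and the other from $l(g)$---together with checking that the $2^{O(\sqrt{\log d})}$ notation legitimately absorbs the degree shift $d \mapsto d+1$ and the constants from composing the two $O(\cdot)$ estimates. I would also make the coefficient-size convention on $g$ explicit, since otherwise $l(g)$, and hence the communication complexity, is not controlled by $m$ and $d$ alone.
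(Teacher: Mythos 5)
Your proposal matches the paper's own proof essentially step for step: define $h := (g + x_{n+1})\cdot x_{n+2}$, observe it is a degree-$(d+1)$ polynomial with $m+1$ terms, bound $\dare{h}$ via Theorem \ref{thm:DRE_poly}, and feed this together with $l(g)$ into Theorem \ref{theorem:construction} to get $O(\dare{h}\cdot l(g)) = m^2\cdot d\cdot 2^{O(\sqrt{\log d})}$. Your caveat about coefficient sizes is well taken---the paper simply asserts $l(g) = m+1$, which implicitly presumes $0/1$ (or at least bounded) coefficients, exactly the convention you make explicit.
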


\begin{proof}
Let $h: \Z^{n+2} \ra \Z$ be a function defined by $h := (g + x_{n+1}) \cdot x_{n+2}$. 
By expanding the formula, $h$ can be regarded as a degree-$(d+1)$ polynomial having $m+1$ terms. 
From Theorem \ref{thm:DRE_poly}, we have a DRE of $h$ with output length $m\cdot d \cdot 2^{O(\sqrt{\log d})}$. 
From Theorem \ref{theorem:construction}, we have a QR-PSM protocol computing $f$ with communication complexity $O(\dare{h} \cdot l(g)) = m^2 \cdot d \cdot 2^{O(\sqrt{\log d})}$ since the embedding length of $g$ is $l(g) = m + 1$. 
\end{proof}

\section{Upper Bound on Primes for QR-PSM Protocols}\label{s:prime}

\subsection{LQR-PSM Primes}\label{ss:prime_def}

We define \emph{the $n$-th linear QR-PSM (LQR-PSM) prime $L_n$} as the smallest prime $p$ such that for any function $f: \bin^n \ra \bin$, there exists a linear QR-PSM protocol modulo $p$ computing $f$. 
The $n$-th LQR-PSM prime for $1 \leq n \leq 4$ are: $L_1 = 3, L_2 = 7, L_3 = 11$, and $L_4 = 37$. 
Although $P_i = L_i$ for $1 \leq n \leq 4$, it does not hold in general. 
Indeed, from Theorem \ref{theo:lowerbound} and Corollary \ref{cor:peralta}, we have $L_n > P_n$ for sufficiently large $n$. 

An LQR-PSM prime is upper bounded by a Peralta prime. 
A trivial bound is $L_n \leq P_{2^n}$ since the length of the truth table is $2^n$. 
The following lemma gives a somewhat non-trivial bound on LQR-PSM primes. 

\begin{lemma}\label{lem:universal_peralta}
We have $L_n \leq P_{2^{n-1}}$. 
\end{lemma}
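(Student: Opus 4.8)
The plan is to show that every Boolean function $f:\bin^n\to\bin$ can be embedded into a linear function whose embedding length is at most $2^{n-1}$; then Theorem~\ref{theorem:LQR-PSM} immediately yields an LQR-PSM protocol modulo $P_{2^{n-1}}$, and by definition of $L_n$ we get $L_n\le P_{2^{n-1}}$. The naive embedding $g(x)=\sum_{i}2^{i-1}x_i$ used in Corollary~\ref{corollary:any} has embedding length $2^n$, so the whole content is to halve this. The idea is that a Boolean function is determined by the $2^{n-1}$ values it takes on inputs with $x_n=0$ together with the $2^{n-1}$ values on inputs with $x_n=1$, but since we only need $g(x)=g(x')\Rightarrow f(x)=f(x')$, we can afford to let pairs of inputs collide as long as $f$ agrees on them.

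First I would fix $f$ and think of the truth table as a function on the Boolean cube. The key observation: we want a linear form $g(x)=\sum_{i=1}^n a_ix_i$ such that the $2^n$ values $g(x)$, $x\in\bin^n$, take at most $2^{n-1}$ distinct values, AND whenever $g(x)=g(x')$ we have $f(x)=f(x')$. One clean way: partition $\bin^n$ into $2^{n-1}$ pairs $\{x,x'\}$ with $f(x)=f(x')$ — this is possible exactly when $f$ takes each value an even number of times, which is \emph{not} always true, so a small fix is needed (e.g. if $|f^{-1}(1)|$ is odd, one leftover element in each class is handled by giving it its own value, and one checks the count still comes out to $\le 2^{n-1}$ when $n\ge 1$; more carefully, one pairs within $f^{-1}(0)$ and within $f^{-1}(1)$ as much as possible, leaving at most one unpaired element in each, i.e. at most $\lfloor |f^{-1}(0)|/2\rfloor + \lfloor|f^{-1}(1)|/2\rfloor + 2 \le 2^{n-1}$ classes when $2^n\ge 4$; the tiny cases $n=1$ are checked directly against the table $L_1=3=P_1$). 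Then I would choose the coefficients $a_i$ to realize this partition as the fibers of a linear map: enumerate the desired classes $C_1,\dots,C_k$ ($k\le 2^{n-1}$) and seek integers $a_1,\dots,a_n$ such that $g$ is constant on each $C_j$ and takes distinct values across the $C_j$. Equivalently, writing $C_j=\{u_j,v_j\}$, the vector $(a_1,\dots,a_n)$ must be orthogonal to every difference $u_j-v_j\in\{-1,0,1\}^n$, and must separate representatives of distinct classes; since the differences span a subspace of dimension at most $n-1$ (each pair contributes one constraint and we have $\le 2^{n-1}-1$ pairs after using up one free dimension... here I need to be careful), I pick $a$ from the orthogonal complement.

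The cleanest concrete choice, which I would try to push through, is the following: pair up $x$ with its complement-in-the-last-coordinate-block or, better, use the standard trick of pairing $x=(x_1,\dots,x_{n-1},0)$ with a suitable $x'=(y_1,\dots,y_{n-1},1)$ so that $g$ glues them; set $a_i=2^{i-1}$ for $i\le n-1$ and choose $a_n$ to be whatever shift makes $f$-inconsistent pairs not collide. Actually the slickest route: define $g(x)=\sum_{i=1}^{n-1}2^{i-1}x_i$ if this already separates $f$-classes — it folds the cube in half along $x_n$, giving embedding length exactly $2^{n-1}$ — and this works precisely when $f$ does not depend on $x_n$ after this fold, which is not general; so instead one reorders/recombines coordinates so that \emph{some} affine hyperplane direction is $f$-irrelevant after folding, using a pigeonhole / dimension count on the $2^n$ truth-table rows. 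The main obstacle, and where I expect to spend the real effort, is exactly this: proving that for every $f$ there is a linear form collapsing the cube to $\le 2^{n-1}$ values consistently with $f$ — i.e. the combinatorial lemma that the minimum embedding length over linear embeddings is at most $2^{n-1}$. I would handle it by an explicit pairing argument: pair each $x$ with $x\oplus e_n$ when $f(x)=f(x\oplus e_n)$, and for the (possibly empty) set $B$ of $x$ with $x_n=0$ where $f$ \emph{does} differ on the $e_n$-edge, assign these edge-endpoints their own singleton values; the number of resulting values is $(2^{n-1}-|B|)+2|B| = 2^{n-1}+|B|$, which is too big in general — so this direct attempt fails and must be replaced by a smarter, possibly non-monotone pairing, or by the observation in the excerpt that one may also use embeddings into degree-$>1$ polynomials is irrelevant here since $L_n$ is about \emph{linear} protocols. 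Given the paper asserts $P_i=L_i$ for $i\le 4$ and $L_n>P_n$ eventually, the bound $L_n\le P_{2^{n-1}}$ is presumably proved via such a pairing/folding construction together with Theorem~\ref{theorem:LQR-PSM}, and I would present it in that order: (i) state and prove the "folding lemma" that every $f:\bin^n\to\bin$ admits a linear embedding of length $\le 2^{n-1}$; (ii) invoke Theorem~\ref{theorem:LQR-PSM} with $l(g)\le 2^{n-1}$ to get an LQR-PSM protocol modulo $P_{2^{n-1}}$; (iii) conclude $L_n\le P_{2^{n-1}}$ by minimality in the definition of $L_n$.
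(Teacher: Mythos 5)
Your reduction does not work, because its central ``folding lemma'' --- that every $f:\bin^n\ra\bin$ admits a \emph{linear embedding} $g$ with $l(g)\leq 2^{n-1}$ --- is false, and you never prove it (you candidly note that your pairing attempt gives $2^{n-1}+|B|$ classes and leave the repair unspecified). Already for $n=2$ and $f=\mathrm{AND}$ there is no linear $g=a_1x_1+a_2x_2$ of embedding length $2$: the four values $0,a_1,a_2,a_1+a_2$ would have to lie in two consecutive integers while $a_1,a_2,a_1+a_2$ are all distinct from $0$, which is impossible, so the minimum embedding length is $3>2^{n-1}$. More generally, a counting argument of the same flavor as Theorem \ref{theo:lowerbound} shows that for most functions the minimum linear embedding length is far larger than $2^{n-1}$, so no argument routed through Theorem \ref{theorem:LQR-PSM} with $l(g)\leq 2^{n-1}$ can establish the lemma.

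The paper's proof bypasses embeddings entirely and uses the definition of an LQR-PSM protocol directly: to have $[a_0,a_1,\ldots,a_n]_p$ compute $f$ one only needs $\Qr{p}{a_0+\sum_i a_ix_i}$ to equal the prescribed value pointwise; the linear form need not separate $f$-inconsistent inputs. Concretely, set $p=P_{2^{n-1}}$ and split $f$ into the two restrictions $f_0,f_1$ obtained by fixing $x_n=b$, with truth tables $t_0,t_1\in\bin^{2^{n-1}}$. Since every $2^{n-1}$-bit string occurs in $\seq_p$, $t_0$ and $t_1$ occur at some offsets $b_0\leq b_1$, and one takes $a_0=b_0$, the coefficients $a_1,\ldots,a_{n-1}$ to be the powers of two enumerating positions inside a window of length $2^{n-1}$, and $a_n=b_1-b_0$, so that the protocol reads the occurrence of $t_0$ when $x_n=0$ and the occurrence of $t_1$ when $x_n=1$. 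Note that possible overlaps of the two windows are harmless, since a shared position of $\seq_p$ carries a single bit that agrees with both truth tables at the corresponding indices --- precisely the situation your embedding-based framing cannot accommodate, since it insists that collisions be avoided rather than exploited.
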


\begin{proof}
Set $p = P_{2^{n-1}}$. 
Let $f: \bin^n \ra \bin$ be any function. 
For a bit $b \in \bin$, let $f_b: \bin^{n-1} \ra \bin$ be a function such that $f_b(x_1, x_2, \ldots, x_{n-1}) = f(x_1, x_2, \ldots, x_{n-1}, b)$, and $t_b \in \bin^{2^{n-1}}$ a string such that the $i$-th bit ($0 \leq i < 2^{n-1}$) of $t_b$ is $f_b(i_1, i_2, \ldots, i_{n-1})$ if $i = \sum_{j=1}^{n-1} 2^{j-1}i_j$, i.e., $t_b$ is the truth table of $f_b$. 
From the property of Peralta prime, $\seq_p$ contains both $t_0$ and $t_1$. 
Let $b_0, b_1 \in \Z_p$ be the offset of the truth tables $t_0, t_1$, i.e., $t_0$ (resp. $t_1$) starts at the $b_0$-th (resp. the $b_1$-th) bit of $\seq_p$. 
Without loss of generality, we can assume $b_0 \leq b_1$. 
Now we have a LQR-PSM protocol $[a_0, a_1, a_2, \ldots, a_n]_p$ computing $f$, where $a_0 = b_0$, $a_i = 2^{n-1-i}$ for $1 \leq i \leq n-1$, and $a_n = b_1 - b_0$. 
Therefore, we have $L_n \leq P_{2^{n-1}}$. 
\end{proof}

We obtain a lower bound on LQR-PSM primes via counting the number of LQR-PSM protocols. 

\begin{theorem}\label{theo:lowerbound}
We have $L_n \geq 2^{\frac{2^n-2}{n}}$. 
\end{theorem}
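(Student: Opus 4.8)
The plan is to prove the lower bound $L_n \geq 2^{(2^n-2)/n}$ by a counting argument: I will bound the number of functions $f:\bin^n\ra\bin$ that admit an LQR-PSM protocol modulo a given prime $p$, and compare this with the total number $2^{2^n}$ of such functions. An LQR-PSM protocol modulo $p$ computing $f$ is of the form $[a_0,a_1,\ldots,a_n]_p$, and by Theorem~\ref{theorem:LQR-PSM} (and the definition of the protocol) the function it computes is determined by the tuple $(a_0,a_1,\ldots,a_n)\in\Z_p^{n+1}$ via $f(x) = \frac{1}{2}\left(1-\Qr{p}{a_0+\sum_{i=1}^n a_i x_i}\right)$, provided the Legendre symbol never vanishes on $\bin^n$. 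Hence the number of functions computable by \emph{some} LQR-PSM protocol modulo $p$ is at most the number of tuples, namely $p^{n+1}$.

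The key steps, in order: first, observe that if $L_n = p$, then by definition \emph{every} function $f:\bin^n\ra\bin$ has an LQR-PSM protocol modulo $p$; so the map sending a tuple $(a_0,\ldots,a_n)\in\Z_p^{n+1}$ to the function it computes must be surjective onto all $2^{2^n}$ functions. This gives $p^{n+1}\geq 2^{2^n}$, i.e. $p\geq 2^{2^n/(n+1)}$. That is slightly weaker than the claimed bound, so the second step is to sharpen the count. The improvement comes from noticing a scaling redundancy: multiplying all $a_i$ by a fixed nonzero quadratic residue $c\in\QR_p$ leaves the Legendre symbol $\Qr{p}{c\cdot(a_0+\sum a_i x_i)} = \Qr{p}{a_0+\sum a_i x_i}$ unchanged, so the tuples $(a_0,\ldots,a_n)$ and $(ca_0,\ldots,ca_n)$ compute the same function. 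The orbits of this $\QR_p$-action on the nonzero tuples have size $|\QR_p| = (p-1)/2$, so the number of distinct computable functions is at most roughly $1 + \frac{p^{n+1}-1}{(p-1)/2} \le 1 + 2p^n$ — actually one should be a little careful and also note the all-zero tuple and tuples with $a_1=\cdots=a_n=0$ contribute only constant functions — but the dominant term is $2p^n$. Setting $2p^n \geq 2^{2^n}$ (using also that all $2^{2^n}$ functions must be hit, including the $2^{2^n}-2$ non-constant ones, which forces the $p^n$-sized count to dominate) yields $p^n \geq 2^{2^n-1}$, hence $p \geq 2^{(2^n-1)/n}$. To reach exactly $2^{(2^n-2)/n}$ one more redundancy should be extracted — e.g. negation $a_i\mapsto -a_i$ composed with the quadratic-residue scaling, or absorbing the offset $a_0$ against a shift, effectively another factor of $2$ — giving the count $\leq p^n$ distinct non-constant functions computable, and the total of non-constant functions being $2^{2^n}-2$, whence $p^n \ge 2^{2^n}-2 \ge 2^{(2^n-2)}$ for the relevant range, i.e. $p \ge 2^{(2^n-2)/n}$.

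The main obstacle is pinning down the exact constant in the exponent, i.e. making the redundancy-counting tight enough to get $2^n-2$ rather than $2^n$ or $2^n-1$ in the numerator. The cleanest route is probably to fix the normalization directly: given any LQR-PSM tuple computing a non-constant $f$, not all of $a_1,\ldots,a_n$ vanish, so we may scale by an element of $\Z_p^\times$ to make a distinguished nonzero coordinate equal to $1$ (this uses the full $\Z_p^\times$-action on the Legendre symbol up to the sign flip $\Qr{p}{-1}$, which I will handle by also allowing the function $f\mapsto \bar f$), and then $a_0$ ranges over $\Z_p$ while the remaining $n-1$ free coordinates range over $\Z_p$, giving at most $p^n$ tuples, and after accounting for the sign ambiguity, at most $2p^n$ — but since each non-constant function and its complement are \emph{both} non-constant and the count must cover all $2^{2^n}-2$ of them, we get $2p^n \ge 2^{2^n}-2$, and for $n\ge 1$ this gives $p^n \ge 2^{2^n-1}-1 \ge 2^{2^n-2}$, hence $p \ge 2^{(2^n-2)/n}$. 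I would present this normalization argument rather than the orbit-counting one, since it avoids case analysis on orbit sizes and makes the bookkeeping of the two factors of $2$ transparent.
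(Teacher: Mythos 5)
Your proposal is essentially the paper's own argument: the paper counts protocol tuples $[a_0,a_1,\ldots,a_n]_p$ up to ``conjugation'' (scaling all coordinates by a quadratic residue $s\in\QR_p$), bounds the number of classes by $\frac{2p^{n+1}}{p-1}\le 4p^n$, and compares with the $2^{2^n}$ Boolean functions, which yields exactly $p\ge 2^{(2^n-2)/n}$; your orbit-counting route is the same idea, and your normalization route is only a minor variant that additionally uses scaling by nonresidues (complementation). One bookkeeping caution: your intermediate counts are slightly too small. The number of $\QR_p$-orbits of nonzero tuples is $\frac{p^{n+1}-1}{(p-1)/2}=2(p^n+p^{n-1}+\cdots+1)>2p^n$, and the number of normalized tuples (leading nonzero coefficient among $a_1,\ldots,a_n$ set to $1$) is $p\cdot\frac{p^n-1}{p-1}>p^n$; both exceed your stated bounds by a factor up to $\frac{p}{p-1}$, so the sharper exponents you float along the way, such as $p\ge 2^{(2^n-1)/n}$, do not follow as written. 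Since the corrected counts are still at most $3p^n+1\le 4p^n$ for odd primes, the inequality $4p^n\ge 2^{2^n}$ and hence the stated bound $L_n\ge 2^{\frac{2^n-2}{n}}$ survive, in agreement with the paper's proof.
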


\begin{proof}
We say that two protocols $[a_0, a_1, \ldots, a_n]_p$ and $[b_0, b_1, \ldots, b_n]_p$ are \emph{conjugate} if there exists a quadratic residue $s \in \QR_p$ such that $b_i = sa_i$ for $0 \leq i \leq n$. 
Note that if two protocols are conjugate, they compute the same function. 
Since the number of $n$-variable Boolean functions $2^{2^n}$ is a lower bound on the number of protocols $\frac{2p^{n+1}}{p-1}$ (up to conjugate), we have $\frac{2p^{n+1}}{p-1}\geq 2^{2^n}$. 
Since it holds $4 \geq \frac{2p}{p-1}$ for every prime $p$, we have $4p^{n}\geq 2^{2^n}$. 
Taking logarithms, we have $p \geq 2^{\frac{2^n-2}{n}}$. 
\end{proof}

\subsection{Paley Graphs and Paley Tournaments}\label{ss:Paley}

We introduce Paley graphs and Paley tournaments, which play important roles in many areas, such as graph theory and additive combinatorics.
In this paper, a {\it graph} is an undirected graph without multiple edges and loops, and a {\it tournament} is an oriented complete graph.  

\begin{definition}[Paley graph]
Let $p\equiv 1 \pmod{4}$ be a prime. Then, the {\it Paley graph} $G_p$ with $p$ vertices is a graph with vertex set $\Z_p$ in which two distinct vertices $x$ and $y$ are adjacent if and only if $x-y \in \mathcal{R}_p$.
\end{definition}
Note that the adjacency of $x,y$ is independent of the order of $x, y$ since $\qr{-1}=1$, which follows from the assumption of $p$. 

\begin{definition}[Paley tournament]
Let $p\equiv 3 \pmod{4}$ be a prime. Then, the {\it Paley tournament} $T_p$ with $p$ vertices is a tournament with vertex set $\Z_p$ in which for two distinct vertices $x$ and $y$, there is a directed edge from $x$ to $y$ if and only if $x-y \in \mathcal{R}_p$.
\end{definition}

Note that the Paley tournament is a tournament, i.e., every distinct vertex $x, y$ have either a directed edge from $x$ to $y$ or a directed edge from $y$ to $x$ since $\qr{-1}=-1$ holds, which follows from the assumption of $p$.

Figure \ref{fig:one} shows Paley graph $G_{17}$ and Figure \ref{fig:two} shows Paley tournament $T_7$ as examples. 

\begin{figure}[htbp]
 \begin{minipage}{0.5\hsize}
  \begin{center}
   \includegraphics[width=50mm]{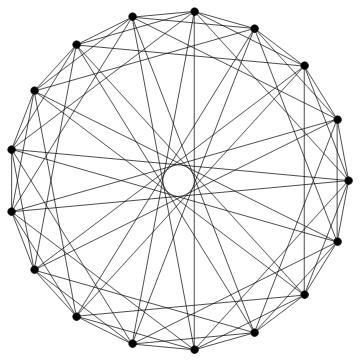}
  \end{center}
  \caption{$G_{17}$}
  \label{fig:one}
 \end{minipage}
 \begin{minipage}{0.5\hsize}
  \begin{center}
   \includegraphics[width=50mm]{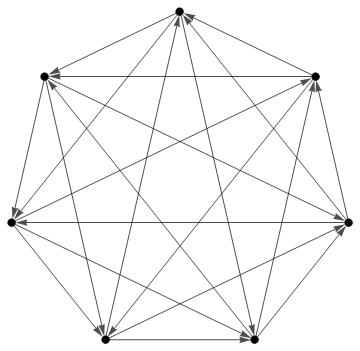}
  \end{center}
  \caption{$T_7$}
  \label{fig:two}
 \end{minipage}
\end{figure}

Paley graph (tournament) is known as a typical example of graphs (tournaments) satisfying various ``random-like" properties, which means properties that random graphs (tournaments) realize with high probability~\cite[Chapter 9]{AS2016}, \cite{B2009, S2021}.

The following property is one of such random-like properties.

\begin{definition}
\label{def-ec}
Let $n\geq 1$ be an integer.
Then, a graph $G$ with vertex set $\Z_p$ is said to have the property $(*)_n$ if for any set $S$ of $n$ consecutive elements of $\Z_p$ and any pair of disjoint (possibly empty) sets of elements, say $A$ and $B$, with $A\cup B=S$, there exists a vertex $z_{A, B} \notin S$ such that $z_{A, B}$ is adjacent to all vertices in $A$, but none in $B$. 
Similarly, a tournament $T$ with vertex set $\Z_p$ is said to have the property $(*)_n$
if for any set $S$ of $n$ consecutive elements of $\Z_p$ and any pair of disjoint (possibly empty) sets of elements $A$ and $B$ with $A\cup B=S$, there exists a vertex $z_{A, B} \notin S$ such that for every vertex $a \in A$ and $b \in B$, there exist an edge from $z_{A, B}$ to $a$ and an edge from $b$ to $z_{A, B}$. 
\end{definition}

\begin{remark}
The property $(*)_n$ is a weaker version of the $n$-existentially closed ($n$-e.c.) property which is known as a finite-analogue of the axiom of the countable random graph (a.k.a. the Rado graph, see, e.g., \cite{C1997}).
The details of the $n$-e.c. property and its application to constructing circulant almost orthogonal arrays can be found in \cite{B2009} and \cite{YSPS2022}.
\end{remark}

\subsection{Upper Bound on Peralta Primes}\label{ss:upper}

The following theorem establishes a connection between Paley graphs, tournaments and Peralta primes.  
The fundamental idea to prove this theorem can be found in \cite{YSPS2022}.
\begin{theorem}
\label{thm-main1}
Let $n\geq 1$ be an integer and $p>n$ denote an odd prime.
When $p\equiv 1 \pmod{4}$, $p$ is $n$-Peralta if and only if $G_p$ has the property $(*)_n$.
Similarly, when $p\equiv 3 \pmod{4}$, $p$ is $n$-Peralta if and only if $T_p$ has the property $(*)_n$.
\end{theorem}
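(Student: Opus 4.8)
The plan is to translate the combinatorial condition ``$\seq_p$ contains every $n$-bit string'' into the graph-theoretic property $(*)_n$ by reading off adjacency in $G_p$ (resp. $T_p$) as the Legendre symbol of differences. Fix the prime $p > n$. The key observation is that for a vertex $z \in \Z_p$ and a consecutive block $S = \{c+1, c+2, \ldots, c+n\}$, the pattern of adjacencies of $z$ to the elements of $S$ — namely the tuple $\left(\Qr{p}{z-(c+1)}, \Qr{p}{z-(c+2)}, \ldots, \Qr{p}{z-(c+n)}\right)$ when $p \equiv 1 \pmod 4$ — depends only on the quadratic residuosity of $n$ consecutive values, which is precisely an $n$-bit window of $\seq_p$ (after a harmless reindexing, since $\seq_p$ records $\Qr{p}{i}$ for $i \in [p-1]$). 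So the existence of a witness vertex $z_{A,B}$ realizing a prescribed split $(A,B)$ of $S$ is equivalent to the appearance of the corresponding $0/1$ pattern as a length-$n$ substring of $\seq_p$.

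First I would set up the dictionary carefully for the case $p \equiv 1 \pmod 4$. Here $\Qr{p}{-1} = 1$, so $x \sim y$ in $G_p$ iff $x - y \in \R_p$ iff $\Qr{p}{x-y} = 1$, symmetrically. Given a target string $t = t_1 t_2 \cdots t_n \in \bin^n$, I want an index $j$ with $1 \le j$ and $j+n-1 \le p-1$ such that $\Qr{p}{j + k - 1} = 1 \iff t_k = 1$ for $k \in [n]$. To get this from $(*)_n$: take $S$ to be any block of $n$ consecutive elements, say $S = \{-1, -2, \ldots, -n\}$ (i.e. $\{p-1, p-2, \ldots, p-n\}$), put $A = \{-k : t_k = 1\}$ and $B = \{-k : t_k = 0\}$; then a witness $z \notin S$ adjacent to exactly $A$ satisfies $\Qr{p}{z + k} = 1 \iff t_k = 1$, so the substring of $\seq_p$ starting at position $z+1$ (reduced mod $p$ into $[p-1]$; note $z \ne 0$ is forced when some $t_k$ differs, and the degenerate case can be handled directly) equals $t$. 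Conversely, given that $\seq_p$ contains every $n$-bit string, and given an arbitrary block $S = \{c+1, \ldots, c+n\}$ with a split $(A,B)$, I read off the target pattern $t$ determined by $A$, find a starting index $j$ where $\seq_p$ displays $t$, and set $z := j - c - 1 \pmod p$; then $\Qr{p}{z + (c+k)} = \Qr{p}{j+k-1}$ matches $t_k$, so $z$ is adjacent to exactly $A$ — and one must check $z \notin S$, which follows because for $w \in S$ we would need $\Qr{p}{0} \ne \pm 1$, contradicting that the pattern values are all $\pm 1$ (this uses $p > n$ so that a length-$n$ window of $\seq_p$ never wraps past the single index where the ``value'' is $0$; I should state this wrap-around bookkeeping as the one genuinely fiddly point).

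For $p \equiv 3 \pmod 4$ the argument is the same except $\Qr{p}{-1} = -1$, so the relation becomes a tournament: there is an edge $x \to y$ iff $\Qr{p}{x - y} = 1$, and $x \to y$ xor $y \to x$ always holds. The property $(*)_n$ for $T_p$ asks, for a block $S$ split into $(A,B)$, for a vertex $z \notin S$ with $z \to a$ for all $a \in A$ and $b \to z$ for all $b \in B$ — i.e. $\Qr{p}{z - a} = 1$ for $a \in A$ and $\Qr{p}{z - b} = -1$ (equivalently $\Qr{p}{b - z} = 1$) for $b \in B$. This is exactly the same pattern-matching condition on $n$ consecutive Legendre symbols near $z$, so the identical dictionary applies verbatim; I would simply remark that the orientation convention is chosen so that ``$z$ beats $a$'' corresponds to $\Qr{p}{z-a}=1$, making the tournament case and the graph case formally identical at the level of Legendre symbols.

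The main obstacle I anticipate is not conceptual but bookkeeping: making the index translation between ``position in the string $\seq_p \in \bin^{p-1}$'' and ``element of $\Z_p$'' fully rigorous, in particular ensuring that a length-$n$ window of $\seq_p$ genuinely corresponds to $n$ consecutive nonzero residues (using $p > n$ to rule out the window straddling $0 \equiv p$), and handling the edge case where the all-prescribed pattern could in principle be realized only by $z$ landing inside $S$. I would dispatch this by noting that every entry of the realized pattern is $\pm 1$ while $\Qr{p}{0} = 0$, so $z - w \ne 0$ for every $w \in S$, forcing $z \notin S$ automatically; the reindexing between $S$ and the starting position in $\seq_p$ is then just a translation in $\Z_p$, which is a bijection, so no solutions are lost in either direction.
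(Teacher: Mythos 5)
Your proposal is in substance the paper's own argument: the same Legendre-symbol dictionary between adjacency patterns of a vertex $z$ on a block of $n$ consecutive elements and length-$n$ windows of $\seq_p$, with $z\notin S$ forced by $\Qr{p}{0}=0$ and non-wrapping forced by the window avoiding $0$; where the paper invokes the translation automorphism $x\mapsto x+1$ to reduce $(*)_n$ to the single block $\{1,\dots,n\}$, you translate explicitly, which is the same idea. One local slip to repair in your converse direction: adjacency in $G_p$ and edges in $T_p$ are governed by \emph{differences}, not the sum $z+(c+k)$ you check, and with $z:=j-c-1$ the relevant difference $(c+k)-z=2c+k+1-j$ is not the window entry $j+k-1$; take instead $z:=c+1-j$ for $G_p$ (legitimate since $\Qr{p}{-1}=1$ makes $\Qr{p}{z-(c+k)}=\Qr{p}{(c+k)-z}$), while for $T_p$, where $\Qr{p}{-1}=-1$ forbids this sign flip, set $z:=c+n+j$ and choose $j$ so that the window displays the \emph{reversal} of the target pattern, which exists because $\seq_p$ contains every $n$-bit string. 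With that one-line fix the argument is correct and matches the paper's.
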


\begin{proof}
Let $n\geq 1$ and assume that $p\equiv 1 \pmod{4}$ is a prime with $p>n$; the discussion below works for the case of a prime $p\equiv 3 \pmod{4}$ as well.
Suppose that the Paley graph $G_p$ has the property $(*)_n$.
Let $t:=(t_1, t_2, \ldots, t_n)\in \bin^n$ be an arbitrarily given sequence.
Set $A:=\{i \in \{1, 2, \ldots, n\} \mid t_i=1\}$ and $B:=\{i \in \{1, 2, \ldots, n\} \mid t_i=0\}$.
Notice that $A\cup B=\{1, 2, \ldots, n\}$.
Then, from the assumption of $G_p$, there exists some $z=z_{A, B} \in \Z_p \setminus \{1, 2, \ldots, n\}$ such that $(\frac{i-z}{p})=1$ if and only if $i \in A$. Here notice that for any $i \in \{1, 2, \ldots, n\}$ we have $(\frac{i-z}{p})\neq 0$ since $z \notin \{1, 2, \ldots, n\}$. 
Now consider the sequence 
\[
\seq_{z}:=\biggl( \frac{1}{2}+\frac{1}{2}\Qr{p}{1-z}, \frac{1}{2}+\frac{1}{2}\Qr{p}{2-z}, \ldots, \frac{1}{2}+\frac{1}{2}\Qr{p}{n-z} \biggr).
\]
Since $z \notin \{1, 2, \ldots, n\}$, $\seq_{z}$ forms a consecutive subsequence of $\seq_p$, and we now have $\seq_{z}=t$.
Conversely if $p$ is an $n$-peralta prime, then $\seq_{p}$ contains any sequence $t \in \{0, 1\}^n$. Since the permutation $x \mapsto x+1$ on $\Z_p$ is an automorphism of $G_p$, to prove that $G_p$ has the property $(*)_n$, it suffices to check that there exists $z_{A, B}$ with respect to the subsets $A, B$ defined above, which is obvious from the assumption of $p$.
\end{proof}

Thus, we immediately obtain the following corollary.

\begin{corollary}
\label{cor-main1}
For $n\geq 1$, let $m^{(G)}_n$ be the least prime $p \equiv 1 \pmod{4}$ such that $G_p$ has the property $(*)_n$, and similarly, $m^{(T)}_n$ denotes the least prime $p \equiv 3 \pmod{4}$ such that $T_p$ has the property $(*)_n$.
Set $m_n:=\min\{m^{(G)}_n, m^{(T)}_n\}$.
Then, we have $P_n=m_n$.
\end{corollary}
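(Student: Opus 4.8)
The plan is to deduce this directly from Theorem~\ref{thm-main1}, treating the two congruence classes of primes separately and then combining them. First I would recall that every odd prime $p$ satisfies either $p \equiv 1 \pmod 4$ or $p \equiv 3 \pmod 4$, and that $2$ is trivially not $n$-Peralta for $n \geq 2$ (and the small cases $n=1$ can be checked by hand against the table of $\seq_p$, where $P_1 = 3 \equiv 3 \pmod 4$), so we may restrict attention to odd primes $p > n$ when $n$ is at least some small constant; for the finitely many remaining small $n$ the equality $P_n = m_n$ is verified by the displayed table of Peralta primes.

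The main step is the following chain of equivalences. By definition, $P_n$ is the smallest $n$-Peralta prime. Split the set of $n$-Peralta primes $p > n$ into those with $p \equiv 1 \pmod 4$ and those with $p \equiv 3 \pmod 4$. By Theorem~\ref{thm-main1}, a prime $p \equiv 1 \pmod 4$ with $p > n$ is $n$-Peralta if and only if $G_p$ has property $(*)_n$, so the smallest such prime is exactly $m^{(G)}_n$; likewise, a prime $p \equiv 3 \pmod 4$ with $p > n$ is $n$-Peralta if and only if $T_p$ has property $(*)_n$, so the smallest such prime is $m^{(T)}_n$. Therefore the smallest $n$-Peralta prime overall is $\min\{m^{(G)}_n, m^{(T)}_n\} = m_n$, i.e.\ $P_n = m_n$.

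The only subtlety to address carefully is the side condition $p > n$ appearing in Theorem~\ref{thm-main1}: I would observe that any $n$-Peralta prime $p$ must satisfy $p - 1 \geq n$ (since $\seq_p$ has length $p-1$ and must contain a string of length $n$), and in fact $p$ is far larger than $n$ for $n \geq 2$ by the lower bound $P_n \geq$ (say) $2^{n/2}$ implicit in the counting arguments, so the restriction $p > n$ is harmless — it excludes no $n$-Peralta prime beyond the trivial small cases already handled. I do not expect any genuine obstacle here; the content is entirely in Theorem~\ref{thm-main1}, and this corollary is a bookkeeping consequence of splitting by residue class modulo $4$.
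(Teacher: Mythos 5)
Your proposal is correct and matches the paper, which derives this corollary directly (indeed, ``immediately'') from Theorem~\ref{thm-main1} by splitting odd primes into the two residue classes modulo~$4$; your extra care about the side condition $p>n$ (noting that any $n$-Peralta prime has $p-1\geq n$, and likewise property $(*)_n$ forces $p>n$) is exactly the bookkeeping the paper leaves implicit. No gaps.
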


Substantially, the following theorem was proved by Graham and Spencer~\cite{GS1971}, Blass, Exoo and Harary~\cite{BEH1981}, Bollob\'{a}s and Thomason~\cite{BT1981} in the context of graph theory.

\begin{theorem}[\cite{BEH1981, BT1981, GS1971}]
\label{thm-ec}
For $n\geq 1$ and every prime $p>n^22^{2n-2}$, both of $G_p$ and $T_p$ have the property $(*)_n$.
In particular, $m_n> n^22^{2n-2}$ for $n\geq 1$.
\end{theorem}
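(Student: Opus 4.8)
The goal is to show that for every prime $p > n^2 2^{2n-2}$ both the Paley graph $G_p$ (when $p \equiv 1 \pmod 4$) and the Paley tournament $T_p$ (when $p \equiv 3 \pmod 4$) satisfy the property $(*)_n$, from which $m_n > n^2 2^{2n-2}$ is immediate. The approach is the classical character-sum argument of Graham--Spencer and Bollob\'as--Thomason. Fix $n$ consecutive elements $S = \{c+1, c+2, \ldots, c+n\} \subseteq \Z_p$ and a partition $S = A \sqcup B$. Encode the desired adjacency pattern by a sign vector $\varepsilon \colon S \to \{-1, +1\}$, with $\varepsilon(s) = +1$ for $s \in A$ and $\varepsilon(s) = -1$ for $s \in B$. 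A vertex $z \notin S$ works precisely when $\Qr{p}{z - s} = \varepsilon(s)$ for all $s \in S$ (for the tournament case the same condition does the job, reading the direction of the edge off the Legendre symbol). So it suffices to show the count
\[
N := \#\left\{ z \in \Z_p : \Qr{p}{z-s} = \varepsilon(s) \text{ for all } s \in S \right\}
\]
is strictly positive after we subtract off the at most $n$ elements of $S$ and the elements $z$ with $z - s = 0$ for some $s$.

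\textbf{Key steps.} First I would write the standard indicator identity: for $u \not\equiv 0$, $\tfrac12\bigl(1 + \varepsilon(s)\Qr{p}{u}\bigr)$ is $1$ if $\Qr{p}{u} = \varepsilon(s)$ and $0$ otherwise. Multiplying over $s \in S$ and summing over all $z \in \Z_p$ gives
\[
\sum_{z \in \Z_p} \prod_{s \in S} \frac{1}{2}\left(1 + \varepsilon(s)\Qr{p}{z-s}\right) = N + O(n),
\]
where the $O(n)$ error absorbs the (at most $n$) values $z \in S$ at which some factor $z-s$ vanishes; one has to be slightly careful to note that each such term contributes between $0$ and $1$, so the true count of good $z$ outside $S$ is at least $\bigl(\sum_z \prod_s \cdots\bigr) - n$. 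Expanding the product, the main term is $2^{-n}$ times $\sum_{z}\prod_s 1 = 2^{-n} p$, and every other term is $2^{-n}$ times a sum of the form $\sum_{z \in \Z_p} \Qr{p}{(z - s_1)(z-s_2)\cdots(z-s_k)}$ over a nonempty subset $\{s_1, \ldots, s_k\} \subseteq S$ of size $k \ge 1$, with a sign $\prod \varepsilon(s_i)$. The central estimate is Weil's bound: for a polynomial $h(z)$ that is not a perfect square mod $p$ (here $h(z) = \prod_{i=1}^k (z - s_i)$ with distinct roots, so it is squarefree and non-square), $\bigl|\sum_{z \in \Z_p} \Qr{p}{h(z)}\bigr| \le (k-1)\sqrt{p}$. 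Summing the absolute values of all $2^n - 1$ nonempty subsets and bounding each $k-1$ crudely by $n-1$ (or more carefully by summing $\binom{n}{k}(k-1)$), the total error is at most $2^{-n}(2^n - 1)(n-1)\sqrt p < n 2^{-n} \cdot 2^n \sqrt p / 2 \cdot (\text{slack})$; the point is it is bounded by roughly $n \sqrt p$ up to the $2^{-n}$ normalization — I would track constants so that the final inequality reads $N - n \ge 2^{-n} p - (\text{const})\, n \sqrt p - n > 0$ exactly when $\sqrt p > n 2^{n-1}$, i.e. $p > n^2 2^{2n-2}$. Then I would record the $p \equiv 3 \pmod 4$ case as identical verbatim, using $\Qr{p}{-1} = -1$ to confirm that $\{z \to z \text{ or } s \to z\}$ is genuinely a tournament edge whose direction matches the prescribed pattern.

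\textbf{Main obstacle.} The substantive input — Weil's bound on character sums $\sum_z \Qr{p}{h(z)}$ for non-square squarefree $h$ — is a black box we may cite; the real care is bookkeeping. The delicate point is the threshold: getting the constant in front of $n\sqrt p$ down to exactly $1$ (so that $p > n^2 2^{2n-2}$ suffices, with strict inequality and no lower-order correction) requires the precise Weil bound $(k-1)\sqrt p$ rather than the weaker $k\sqrt p$, and requires noting that the $k=1$ terms are not error terms at all but exact: $\sum_{z\in\Z_p}\Qr{p}{z - s} = 0$ (the Legendre symbol is balanced), so those $n$ single-element subsets contribute \emph{zero}, not $\sqrt p$. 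Likewise one should verify the subtracted $S$-term count is handled with the right sign. Once these three small accountings (the vanishing of the linear terms, the $(k-1)$ in Weil, and the $-n$ from $|S|$) are done correctly, the inequality $2^{-n}p > 2^{-n}\sum_{k\ge 2}\binom{n}{k}(k-1)\sqrt p + n$ follows from $\sqrt p > n 2^{n-1}$ by a short computation, giving $P_n = m_n > n^2 2^{2n-2}$.
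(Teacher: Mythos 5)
The paper gives no proof of this theorem at all --- it is quoted from Graham--Spencer, Blass--Exoo--Harary, and Bollob\'as--Thomason --- and your proposal correctly reconstructs exactly the classical character-sum argument used in those sources: the indicator expansion $2^{-n}\prod_{s\in S}\bigl(1+\varepsilon(s)\Qr{p}{z-s}\bigr)$, Weil's bound $(k-1)\sqrt{p}$ for the squarefree non-square products, the vanishing of the linear ($k=1$) terms, and the $O(n)$ correction for $z\in S$, which together give positivity precisely when $\sqrt{p}>n2^{n-1}$, i.e.\ $p>n^22^{2n-2}$; the tournament case via $\Qr{p}{-1}=-1$ is also handled correctly. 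One caveat concerns the statement rather than your argument: the clause ``in particular, $m_n>n^22^{2n-2}$'' (echoed at the end of your write-up as $P_n=m_n>n^22^{2n-2}$) has the inequality in the wrong direction --- what actually follows from the first sentence is an \emph{upper} bound, namely that $m_n$ is at most the least prime exceeding $n^22^{2n-2}$, which is how the paper uses it in its subsequent corollaries.
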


Furthermore, it was proved in \cite{AC1993g, AC1993t} that for an odd prime $p$, both of $G_p$ and $T_p$
have the property $(*)_n$ if $p>\{(n-3)2^{n-1}+2\}\sqrt{p}+(n+1)2^{n-1}-1$.

The following corollary is a direct consequence of Theorems~\ref{thm-main1}, \ref{thm-ec} and Corollary~\ref{cor-main1}, which improves Corollary \ref{cor:peralta} by a constant factor $(1+\sqrt{2})^2 \fallingdotseq 5.828$. 

\begin{corollary}
\label{cor-main2}
If an odd prime $p$ satisfies that $p> n^22^{2n-2}$, then $p$ is $n$-Peralta.
As a consequence, we have $P_n< n^22^{2n-2}$ for $n\geq 1$.
\end{corollary}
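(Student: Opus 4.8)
The plan is to treat the two assertions separately: the first drops straight out of the structural results already in hand, and for the second I would produce, for each $n$, an $n$-Peralta prime genuinely smaller than $n^{2}2^{2n-2}$. \emph{For the first assertion}, let $p$ be an odd prime with $p>n^{2}2^{2n-2}$. The hypothesis $p>n$ of Theorem~\ref{thm-main1} is met: for $n=1$ one has $p\geq 3>1$, and for $n\geq 2$ one has $p>n^{2}2^{2n-2}\geq 16>n$. Splitting on $p\bmod 4$: if $p\equiv 1\pmod 4$, Theorem~\ref{thm-ec} says $G_{p}$ has the property $(*)_{n}$, so Theorem~\ref{thm-main1} gives that $p$ is $n$-Peralta; if $p\equiv 3\pmod 4$, the same argument applies with $T_{p}$ in place of $G_{p}$. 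Hence every odd prime exceeding $n^{2}2^{2n-2}$ is $n$-Peralta.

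\emph{For the second assertion}, first note --- via Corollary~\ref{cor-main1}, or directly, since $\seq_{2}=1$ contains no zero and hence no $n$-Peralta prime equals $2$ --- that $P_{n}$ is the least odd $n$-Peralta prime, so it suffices to exhibit one strictly below $n^{2}2^{2n-2}$. The first assertion alone only places such a prime within one prime gap \emph{above} $n^{2}2^{2n-2}$, so I would instead invoke the sharper criterion recalled just before the corollary \cite{AC1993g,AC1993t}: every odd prime $p$ with $p>\{(n-3)2^{n-1}+2\}\sqrt{p}+(n+1)2^{n-1}-1$ makes its associated Paley structure satisfy $(*)_{n}$, hence --- together with $p>n$ and Theorem~\ref{thm-main1} --- is $n$-Peralta. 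Reading this as $p-a\sqrt{p}-b>0$ with $a=(n-3)2^{n-1}+2$ and $b=(n+1)2^{n-1}-1$, it holds once $\sqrt{p}$ exceeds the larger root $\tfrac{1}{2}\bigl(a+\sqrt{a^{2}+4b}\bigr)=(n-3)2^{n-1}+O(1)$, i.e.\ once $p>T_{n}$ where $T_{n}=(n-3)^{2}2^{2n-2}+O\!\bigl(n2^{n}\bigr)$.

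Then it remains to find a prime in the interval $\bigl(T_{n},\,n^{2}2^{2n-2}\bigr)$, whose length is $n^{2}2^{2n-2}-T_{n}=(6n-9)2^{2n-2}-O\!\bigl(n2^{n}\bigr)=\Theta\!\bigl(n2^{2n}\bigr)$. For all sufficiently large $n$, the Baker--Harman--Pintz theorem on prime gaps \cite{BHP2001} furnishes a prime $p$ with $T_{n}<p\leq T_{n}+T_{n}^{0.525}$; since $T_{n}^{0.525}=O\!\bigl(n^{1.05}2^{1.05n}\bigr)=o\!\bigl(n2^{2n}\bigr)$, this $p$ still lies below $n^{2}2^{2n-2}$, and being an odd prime exceeding $n$ that satisfies the Alon--Chung condition it is $n$-Peralta, so $P_{n}\leq p<n^{2}2^{2n-2}$. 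For the finitely many remaining small $n$, I would apply Bertrand's postulate to $T_{n}$ (a prime lies in $\bigl(T_{n},\,2T_{n}\bigr]$, and $2T_{n}<n^{2}2^{2n-2}$ once $2(n-3)^{2}<n^{2}$, i.e.\ for $2\leq n\leq 10$) and read off any intermediate values from the table in Section~\ref{ss:sequence}; here the case $n=1$ is an exception (one has $P_{1}=3$), so the displayed bound is understood for $n\geq 2$. The step I expect to be fiddly is exactly this last one --- making the prime-gap estimate uniform in $n$ so that a prime is guaranteed strictly below $n^{2}2^{2n-2}$ in every regime --- because Theorem~\ref{thm-ec} by itself gives only the weaker conclusion that $P_{n}$ is at most a prime lying just above $n^{2}2^{2n-2}$, and it is the Alon--Chung sharpening that supplies the constant-factor room needed to drop below it.
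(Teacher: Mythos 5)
Your proof of the first assertion is exactly the paper's: Corollary \ref{cor-main2} is justified there only by the remark that it is a direct consequence of Theorems \ref{thm-main1}, \ref{thm-ec} and Corollary \ref{cor-main1}, which is precisely your case split on $p \bmod 4$ (the check $p>n$ is left implicit in the paper). For the second assertion you depart from the paper, and rightly so: the paper gives no further argument, and, as you observe, from ``every odd prime exceeding $n^22^{2n-2}$ is $n$-Peralta'' one can only conclude that $P_n$ is at most the least prime \emph{above} $n^22^{2n-2}$ --- which is what the paper actually uses next, in Corollary \ref{cor-primegap}, to get $P_n=(1+o(1))n^22^{2n-2}$, the bound claimed in the abstract --- and the strict inequality even fails at $n=1$, since $P_1=3>1$. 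Your alternative route, invoking the Alon--Chung refinement \cite{AC1993g,AC1993t} quoted just before the corollary to lower the threshold to $T_n=(n-3)^22^{2n-2}+O(n2^n)$ and then placing a prime in the window $(T_n,\,n^22^{2n-2})$ via Baker--Harman--Pintz \cite{BHP2001} for large $n$ and Bertrand for small $n$, is a genuinely different argument, and it is the honest way to obtain the strict bound for $n\ge 2$; the paper's one-line justification buys brevity but, read literally, supports only the weaker $(1+o(1))$ statement.

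Two caveats on your plan. First, the moderate-$n$ range is not fully closed: Bertrand requires $2T_n<n^22^{2n-2}$, which (with the lower-order terms of $T_n$ included) holds only up to about $n=10$ and only barely at the top of that range, the table of computed values covers only $n\le 8$, and Baker--Harman--Pintz is purely asymptotic with no explicit threshold; an effective prime-gap estimate of Nagura or Dusart type is needed to bridge the intermediate $n$ --- you flag this yourself, and it is the one real gap to fill before the strict inequality for all $n\ge 2$ is established. Second, when applying Theorem \ref{thm-main1} to the primes supplied by the Alon--Chung criterion you should again record $p>n$, which is immediate. Neither caveat touches the first assertion, where your argument and the paper's coincide.
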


Applying the Baker-Harman-Pintz theorem on prime gaps in \cite{BHP2001}, we obtain the following corollary. 

\begin{corollary}
\label{cor-primegap}
For any sufficiently large $n$, there exists an $n$-Peralta prime $p$ with $p \in [n^22^{2n-2}, n^22^{2n-2}+(n^22^{2n-2})^{0.525}]$, which means that $p=(1+o(1))n^22^{2n-2}$.
\end{corollary}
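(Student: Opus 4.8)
The plan is to combine the deterministic bound on Peralta primes from Corollary~\ref{cor-main2} with a quantitative prime-gap result. First I would recall that Corollary~\ref{cor-main2} already gives an effective threshold: every odd prime $p$ with $p > n^2 2^{2n-2}$ is $n$-Peralta. So it suffices to show that the interval $[n^2 2^{2n-2}, n^2 2^{2n-2} + (n^2 2^{2n-2})^{0.525}]$ contains a prime for all sufficiently large $n$; call $c = n^2 2^{2n-2}$ for brevity. Any prime in $(c, c + c^{0.525}]$ is $> c$ and hence $n$-Peralta by Corollary~\ref{cor-main2}, and since such a prime $p$ satisfies $c < p \leq c + c^{0.525} = c(1 + c^{-0.475})$, we get $p = (1+o(1))c$ as $n \to \infty$ because $c^{-0.475} \to 0$.

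The key input is the Baker--Harman--Pintz theorem~\cite{BHP2001}, which states that for all sufficiently large real $x$ the interval $[x, x + x^{0.525}]$ contains a prime. I would apply this with $x = c = n^2 2^{2n-2}$, which tends to infinity with $n$, so the hypothesis ``$x$ sufficiently large'' is met once $n$ is large enough. This yields a prime $p \in [c, c + c^{0.525}]$; note that $p \neq c$ since $c$ is not prime (it is even and greater than $2$), so in fact $p \in (c, c + c^{0.525}]$, which is consistent with the strict lower bound $P_n > n^2 2^{2n-2}$ implied by Theorem~\ref{thm-ec}. Consequently $P_n \leq p \leq c + c^{0.525}$, giving the claimed interval, and the asymptotic $P_n = (1+o(1)) n^2 2^{2n-2}$ follows by sandwiching: $n^2 2^{2n-2} < P_n \leq n^2 2^{2n-2}(1 + o(1))$.

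There is essentially no obstacle here beyond correctly invoking the two cited results; the statement is a routine corollary. The only point requiring a moment's care is checking that the error term is genuinely $o(1)$ relative to the main term, i.e. that $c^{0.525}/c = c^{-0.475} \to 0$, which is immediate since $c \to \infty$. I would present the argument in two or three sentences: fix $c = n^2 2^{2n-2}$, cite Baker--Harman--Pintz to get a prime $p \in [c, c+c^{0.525}]$, cite Corollary~\ref{cor-main2} to conclude $p$ is $n$-Peralta hence $P_n \leq p$, and observe $p = (1+o(1))c$.

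\begin{proof}
Write $c := n^2 2^{2n-2}$, so $c \to \infty$ as $n \to \infty$. By the Baker--Harman--Pintz theorem on prime gaps~\cite{BHP2001}, for all sufficiently large real $x$ the interval $[x, x + x^{0.525}]$ contains a prime; applying this with $x = c$, there exists, for all sufficiently large $n$, a prime $p$ with $p \in [c, c + c^{0.525}]$. Since $c$ is even and $c > 2$, $c$ is not prime, so in fact $p \in (c, c + c^{0.525}]$, and in particular $p > n^2 2^{2n-2}$. By Corollary~\ref{cor-main2}, every odd prime exceeding $n^2 2^{2n-2}$ is $n$-Peralta, so $p$ is $n$-Peralta, whence $P_n \leq p \leq n^2 2^{2n-2} + (n^2 2^{2n-2})^{0.525}$. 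Combining this with the lower bound $P_n > n^2 2^{2n-2}$ (Theorem~\ref{thm-ec} via Corollary~\ref{cor-main1}), we get
\[
n^2 2^{2n-2} < P_n \leq n^2 2^{2n-2}\bigl(1 + c^{-0.475}\bigr),
\]
and since $c^{-0.475} \to 0$ as $n \to \infty$, this yields $P_n = (1 + o(1))\, n^2 2^{2n-2}$.
\end{proof}
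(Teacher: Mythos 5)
Your proof is correct and follows essentially the same route as the paper, which obtains this corollary exactly as in its earlier Corollary~\ref{cor:peralta}: apply the Baker--Harman--Pintz theorem to find a prime in $[c, c+c^{0.525}]$ with $c = n^2 2^{2n-2}$ and invoke the threshold of Corollary~\ref{cor-main2} to conclude it is $n$-Peralta. The extra sandwich step using the lower bound from Theorem~\ref{thm-ec} is not needed for the stated claim (the asymptotic $p=(1+o(1))c$ already follows from $c \le p \le c + c^{0.525}$), but it does no harm.
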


\begin{remark}
A modification of the proof of \cite[Theorem 4.1]{cost2010} shows that for each $n\geq 1$ there is a graph (and tournament) with vertex set $\Z_p$ such that $p=O(n2^n)$ satisfying the property $(*)_n$, where such a graph can be constructed from random Cayley graphs over $\Z_p$.
Since it is known that the Paley graph $G_p$ has various properties that random Cayley graphs over $\Z_p$ satisfy with high probability,
we guess that in fact $m_n=P_n=o(n^2 2^{2n})$.
Although at present there seems to be no known direct approach toward this conjecture, it may be possible to obtain some supporting evidences by considering the following ``random" graph, for example. 
Suppose that $p\equiv 1 \pmod{4}$ is a prime and $1/2\leq q\leq 1$ is a real number.
Then the set $\mathcal{R}_p$ can be partitioned into two non-empty sets $\mathcal{R}_p^{+}$ and $\mathcal{R}_p^{-}$ with same size such that $\mathcal{R}_p^{-}=\{-r \mid r\in \mathcal{R}_p^{+}\}$.
Then for each $r \in \mathcal{R}_p^{+}$ choose a pair $\{r, -r\}$ independently with probability $q$ and form the set $U_p \subseteq \mathcal{R}_p$ consisting of all chosen quadratic residues in $\mathcal{R}_p^+$ and their additive inverses in $\mathcal{R}_p^{-}$. Then construct a graph (denoted by $G_p(q)$) with vertex set $\Z_p$ and connect two vertices $x$ and $y$ if and only if $x-y \in U_p$. 
(A similar construction for primes $p \equiv 3 \pmod{4}$ can be established as well.)
Notice that $G_p(q)$ is a spanning subgraph of $G_p$, and the ``closer'' $G_p(q)$ is to $G_p$, the closer $q$ is to $1$ (in particular $G_p(q)=G_p$ if $q=1$).
We believe that for $q=1-\varepsilon$ with {\it any} $\varepsilon>0$ the probability that $G_p(q)$ with $p=o(n^22^{2n})$ has the property $(*)_n$ tends to $1$ (as $n\to \infty$).
At present it is possible to confirm this claim for $q<3/4$ which follows from a simple computation showing that the probability that $G_p(q)$ does not have the property $(*)_n$ is at most $2^n(1-(1-q)^n)^{p-n}$, which however seems to be far from sharp estimations.
\end{remark}

\section*{Acknowledgments}

The first author was supported during this work by JSPS KAKENHI Grant Number JP21K17702. 
The second author was supported during this work by JSPS KAKENHI Grant Number JP20J20797. 
The third author was supported during this work by JSPS KAKENHI Grant Number JP20J00469. 
The last author was supported during this work by JSPS KAKENHI Grant Number JP19H01109, JST CREST Grant Number JPMJCR2113, and JST AIP Acceleration Research JPMJCR22U5. 

\bibliographystyle{abbrv}
\bibliography{psm}

\end{document}